\documentclass[reqno]{amsart}

\usepackage[linesnumbered,ruled,norelsize,vlined,algo2e]{algorithm2e}
\SetKwInput{KwInput}{Input}                
\SetKwInput{KwOutput}{Output}              

\topmargin  = 0.0 in
\leftmargin = 0.9 in
\rightmargin = 1.0 in
\evensidemargin = -0.10 in
\oddsidemargin =  0.10 in
\textheight = 8.5 in
\textwidth  = 6.6 in
\setlength{\parskip}{2mm}
\setlength{\parindent}{0mm}

\usepackage{lineno,hyperref}

\usepackage{booktabs} 
\usepackage{amssymb,latexsym,amsfonts,amsmath,amsthm}
\usepackage{graphicx}
\usepackage{paralist}
\usepackage{xcolor}
\usepackage{dsfont}
\usepackage[export]{adjustbox}
\usepackage{tcolorbox}
\usepackage{chngcntr}
\usepackage[normalem]{ulem}
\usepackage{subcaption}
\usepackage{enumitem}
\usepackage{eso-pic}
\usepackage{epsfig}
\usepackage{mathtools}
\usepackage{epstopdf}
\usepackage{graphicx}
\usepackage{epsfig}
\usepackage{mathtools}
\usepackage{lineno}

\usepackage[justification=justified]{caption}
\allowdisplaybreaks

\newcommand{\tup}{\textup}

\newtheorem{theorem}{Theorem}
\newtheorem{definition}[theorem]{Definition}

\newtheorem{assumption}[theorem]{Assumption}
\newtheorem{proposition}[theorem]{Proposition}
\newtheorem{remark}[theorem]{Remark}

\DeclareMathOperator{\id}{id}
\usepackage{color}

\usepackage{xspace}	
\newcommand{\trn}{^{\scriptscriptstyle \top}}%
\newcommand{\N}{\mathbb{N}}%
\newcommand{\R}{\mathbb{R}}%

\def\K{\mathcal{K}}%
\def\Kinf{\mathcal{K}_\infty}%
\newcommand{\Rp}{\R_{\geq 0}}%
\let\ul=\underline%

\newcommand{\nn}{\mathbb{N}_0}%
\newcommand{\ba}{\ell^{\infty}}

\newcommand{\Let}{:=}
\newcommand{\I}{\mathrm{id}}
\newcommand{\V}{\mathcal{V}}
\newcommand{\n}{\mathcal{N}}
\newcommand{\m}{\mathcal{M}}

\makeatletter
\long\def\@maketablecaption#1#2{\@tablecaptionsize
	\global \@minipagefalse
	\hbox to \hsize{\parbox[t]{\hsize}{\centering #1 \\ #2}}}
\makeatother

\begin{document}
	
\begin{abstract}
This paper presents a compositional framework for the construction of symbolic models for a network composed of a countably infinite number of finite-dimensional discrete-time control subsystems.
We refer to such a network as infinite network.
The proposed approach is based on the notion of alternating simulation functions.
This notion relates a concrete network to its symbolic model with guaranteed mismatch bounds between their output behaviors.
We propose a compositional approach to construct a symbolic model for an infinite network, together with an alternating simulation function, by composing symbolic models and alternating simulation functions constructed for subsystems. 
Assuming that each subsystem is incrementally input-to-state stable and  under some small-gain type conditions, we present an algorithm for orderly constructing local symbolic models with properly designed quantization parameters. In this way, the proposed compositional approach can provide us a guideline for constructing an overall symbolic model with any desired approximation accuracy.
A compositional controller synthesis scheme is also provided to enforce safety properties on the infinite network in a decentralized fashion. The effectiveness of our result is illustrated through a road traffic network consisting of infinitely many road cells.   
\end{abstract}
	
	\title[Symbolic Models for Infinite Networks of Control Systems: A Compositional Approach]{Symbolic Models for Infinite Networks of Control Systems: A Compositional Approach}

	\author{Siyuan Liu$^1$}
	\author{Navid Noroozi$^2$}
	\author{Majid Zamani$^{3,2}$}
	\address{$^1$Electrical and Computer Engineering Department, Technical University of Munich, Germany.}
	\email{sy.liu@tum.de}
	\address{$^2$Computer Science Department, Ludwig Maximilian University of Munich, Germany.}
	\email{navid.noroozi@lmu.de}
	\address{$^3$Computer Science Department, University of Colorado Boulder, USA.}
	\email{majid.zamani@colorado.edu}
	\maketitle
	
\section{Introduction}
Over the past few decades, large-scale interconnected systems have emerged in a wide range of safety-critical applications, such as traffic networks, smart manufacturing, and power networks.
In such applications, the number of agents can be extremely large, possibly unknown, or even vary over time as agents plug in and out. 
Unless rigorously addressed, such scalability issues may dramatically degrade system performance~\cite{infplatoon,Bamiehtac}.
It is a reasonable strategy to over-approximate the original network with the limit case in \emph{size}, in the sense that we introduce a network having infinitely many subsystems which includes the original network.
We call  this over-approximated network  an \emph{infinite} network  \cite{kawan2019lyapunov,mironchenko2020nonlinear,dashkovskiy2019stability}.
Note that as a special class of infinite-dimensional systems, infinite networks require a  rigorous treatment with careful choice of the infinite-dimensional state space of the overall network.  
It is widely acknowledged that infinite networks capture the essence of the original network, in the sense that functionality indices, e.g. transient and steady-states behaviors, of an infinite network are preserved for its corresponding original finite network; see, e.g.,~\cite{infplatoon,navidcdc}.
In that way, one can eventually develop \emph{scale-free} (i.e., independent of the system size) approaches for the analysis and control of a finite, but arbitrarily large network~\cite{mironchenko2020input,dashkovskiy2020stability,navidcdc,SwikirIfac}.  

This paper is mainly concerned with symbolic controller synthesis  for infinite networks.
In the past few years, symbolic model (a.k.a. finite abstraction) based techniques have been widely developed to assist in the analysis/synthesis of controllers enforcing complex specifications which are difficult to handle using classical control design methods \cite{Tabuada.2009,Pola.2009,Pola.2008}. 
Specifically, symbolic models are abstract descriptions of original dynamics. In this regard, one can first build up a symbolic model of 
the original complex system, then perform analysis or synthesis over the symbolic model in an automated fashion (employing automata-theoretic techniques developed in the computer science literatures \cite{baier2008principles}), and finally translate the results back to the original system with correctness guarantees. 
A major challenge in the construction of symbolic models for large-scale networks is the \emph{curse of dimensionality}, i.e., the computational complexity of constructing symbolic models grows exponentially with the dimension of the system.
In this paper, we aim at proposing a scale-free approach 
to alleviate the computational complexity in the construction of symbolic models for arbitrarily large-scale (potentially infinite) networks. A promising solution is to apply a \emph{divide and conquer} scheme, namely, \emph{compositional approach}.
In this framework, the overall network is decomposed into a set of finite lower-dimensional subsystems, for which symbolic models can be individually constructed in a computationally efficient way.
Then, a symbolic model for the overall network can be obtained by aggregating those of the subsystems.
Various compositional approaches have been explored in the past decade for the construction of symbolic models; see, e.g., \cite{tazaki2008bisimilar,Pola.2016,Majid18,SWIKIR2019,mallik2018compositional,meyer,kim2018constructing}. The results in \cite{tazaki2008bisimilar,Pola.2016,SWIKIR2019,mallik2018compositional} leverage small-gain type conditions to compositionally construct so-called \emph{complete} abstractions for a finite network.
The results presented in \cite{Majid18} introduce a different compositionality framework based on dissipativity theory.
The recent results in \cite{meyer,kim2018constructing} provide compositional construction of so-called \emph{sound} abstractions for a large-scale system without imposing compositionality conditions.
Although promising, all of the above-mentioned compositional approaches are typically tailored to a network composed of a \emph{finite} number of subsystems and do not address the scalability issues discussed earlier.

In this paper, we develop a compositional approach for the construction of symbolic models for \emph{infinite} networks. We first introduce a notion of so-called alternating simulation functions used to relate an infinite network to its symbolic model with bounded mismatch between the output behaviors of them. 
Then, we provide a compositionality result showing that an overall symbolic model can be obtained by composing those of subsystems.
Particularly, for a network composed of infinitely many incrementally input-to-state stable control subsystems, 
we leverage a recently presented small-gain theorem \cite{mironchenko2020nonlinear} and present an algorithm to design quantization parameters for the construction of local symbolic models and local simulation functions in a systematic way.
In particular, we give a \emph{top-down}, still compositional, algorithm computing local quantization parameters with the guarantee of obtaining an overall symbolic model with \emph{any} desired precision.
This differentiates our approach from existing ones as discussed in the sequel (cf. Related Works below).
Moreover, we present a decentralized controller synthesis approach for an infinite network that needs to meet \emph{safety} specifications.
It is shown that by composing local safety controllers which are synthesized for subsystems individually, the resulting overall controller enforces the safety specification on the overall infinite network with a formal guarantee.
Finally, the effectiveness of our proposed framework is verified through a road traffic network containing infinitely many road cells.

\textbf{Related Works.} There have been several attempts on the construction of symbolic models for infinite-dimensional systems~\cite{pola2010symbolic,girard2014approximately}.
The result in~\cite{pola2010symbolic} deals with continuous time-delay systems, for which symbolic models are obtained by projecting the infinite-dimensional functional state-space on a finite-dimensional subspace. The result in~\cite{girard2014approximately} provides a state-space discretization-free approach which can be applied to possibly infinite-dimensional incrementally stable control systems. 
Although the results in~\cite{pola2010symbolic,girard2014approximately} are developed for (time-delay) infinite-dimensional systems, either state or input sets can be infinite dimensional. Here, we allow both state and input sets to be in infinite-dimensional space.
Moreover, the results in~\cite{pola2010symbolic,girard2014approximately} take a monolithic view of the systems while constructing symbolic models.
Therefore, in the case of potential application to an infinite network, the results in~\cite{pola2010symbolic,girard2014approximately} lose the network structure, and hence, they cannot be used for distributed control purposes.
Here, we propose a compositional approach for the construction of symbolic models for infinite networks, such that the network structures are preserved. 
A preliminary investigation of our proposed method appeared in~\cite{SwikirIfac}.
Our results here improve and extend those in~\cite{SwikirIfac} in three directions:
1) In this paper, we provide a detailed and mature description of the results presented in \cite{SwikirIfac}, including all proofs.
2) Here, we provide a \emph{top-down} compositional framework: under certain small-gain type conditions, an algorithm is provided as a guideline to orderly design local quantization parameters with the guarantee of obtaining an overall symbolic model with any desired precision. 
Whereas~\cite{SwikirIfac} presents a \emph{bottom-top} design approach, in the sense that one needs to first design local quantization parameters for subsystems and then use them to compute the overall approximation error.
3) In comparison with the proposed results in~\cite{SwikirIfac}, due to the less conservatism in the definitions of alternating simulation functions in our work (cf. Definitions \ref{SF} and \ref{def:SFD1}), our compositional approach can potentially provide symbolic models for infinite networks with much smaller approximation errors (cf. case study in Section \ref{casestudy}).

\section{Preliminaries}\label{1:II}
\emph{Notation:} We denote by $\mathbb R$, $\nn$, and $\N$ the sets of real numbers, non-negative integers, and positive integers, respectively. We denote the closed, open, and half-open intervals in $\R$ by $[a,b]$, $(a,b)$, $[a,b)$, and $(a,b]$, respectively. For $a,b\in\nn$ and $a\le b$, we use $[a;b]$, $(a;b)$, $[a;b)$, and $(a;b]$ to denote the corresponding intervals in $\nn$. 
Given any $\nu=(\nu_1,\cdots,\nu_n)\in\R^{n}$, we define by 
$|\nu|=\max_{1\leq i\leq n}|\nu_i|$ the infinity norm of $\nu$.  
We denote by $\text{card}(\cdot)$ the cardinality of a given set and by $\varnothing$ the empty set. 
For any set \mbox{$S\subseteq\R^n$} of the form of finite union of boxes, e.g., $S=\bigcup_{j=1}^MS_j$ for some finite number $M\in\N$, where $S_j=\prod_{i=1}^{n} [c_i^j,d_i^j]\subseteq \R^{n}$ with $c^j_i<d^j_i$, we define $\emph{span}(S)=\min_{j=1,\ldots,M}\eta_{S_j}$ and $\eta_{S_j}=\min\{|d_1^j-c_1^j|,\ldots,|d_{n}^j-c_{n}^j|\}$. 
Moreover, for a set in the form of $X= \prod_{i=1}^N X_i$, where $X_i \subseteq \R^{n_i}$, $\forall i\in[1;N]$, are of the form of finite union of boxes, and any positive (component-wise) vector $\phi = [\phi_1;\dots;\phi_N]$ with $\phi_i \leq \emph{span}(X_i)$, $\forall i\in [1;N]$, we define $[X]_\phi= \prod_{i=1}^N [X_i]_{\phi_i}$, where $[X_i]_{\phi_i} = [\R^{n_i}]_{\phi_i}\cap{X_i}$ and  $[\R^{n_i}]_{\phi_i}=\{a\in \R^{n_i}\mid a_{j}=k_{j}\phi_i,k_{j}\in\mathbb{Z},j=1,\ldots,n_i\}$.
Note that if $\phi = [\eta;\dots;\eta]$, where $0\leq\eta\leq\emph{span}(S)$, we simply use notation $[S]_{\eta}$ rather than $[S]_{\phi}$. 
Note that $[S]_{\eta}\neq\emptyset$ for any $0\leq\eta\leq\emph{span}(S)$.
We use the notations $\mathcal{K}$ and $\mathcal{K}_\infty$ to denote different classes of comparison functions, as follows: $\mathcal{K}=\{\alpha:\mathbb{R}_{\geq 0} \rightarrow \mathbb{R}_{\geq 0} |$ $ \alpha$ is continuous, strictly increasing, and $\alpha(0)=0\}$; $\mathcal{K}_\infty=\{\alpha \in \mathcal{K} |$ $ \lim\limits_{r \rightarrow \infty} \alpha(r)=\infty\}$. For $\alpha,\gamma \!\in\! \mathcal{K}_{\infty}$ we write $\alpha\!\le\!\gamma$ if $\alpha(r)\!\le\!\gamma(r)$, and, with a slight abuse of the notation, $\alpha\!=\!c$ if $\alpha(r)\!=\!cr$ for all $c,r\!\geq\!0$.
Finally, we denote by $\I$ the identity function over $\Rp$, i.e., $\I(r)=r$ for all $r\in \Rp$.

\subsection{Infinite networks} \label{infdef}
In this paper, we study the interconnection of a countably infinite number of discrete-time control subsystems.
Using $\N$ as the index set, the $i$-th subsystem	 
is denoted by a tuple $\Sigma{_i} = (X{_i},U{_i},W{_i},f{_i},Y{_i},h{_i})$, where $X{_i}\subseteq\R^{n_i}$, $U{_i}\subseteq\R^{m_i}$, $W{_i}\subseteq\R^{p_i}$, $Y{_i}\subseteq\R^{q_i}$, are the state, external input, internal input, and output set, respectively.
The set valued map $ f{_i}: X{_i} \times U{_i} \times W{_i} \rightrightarrows  X{_i}$ is the state transition function and $h{_i} : X{_i} \rightarrow Y{_i} $ is the output map.	
The discrete-time control subsystem $\Sigma{_i}$ is described by difference inclusions of the form
\begin{align}\label{eq:2}
	\Sigma_i:\left\{
	\begin{array}{rl}
		{\mathbf{x}}{_i}(k+1)\in& f{_i}(\mathbf{x}{_i}(k),\nu{_i}(k),\omega{_i}(k)),\\
		\mathbf{y}{_i}(k)=&h{_i}(\mathbf{x}{_i}(k)),
	\end{array}
	\right.
\end{align}
where $\mathbf{x}_i:\nn\rightarrow  X_i $, $\mathbf{y}_i:\nn\rightarrow  Y_i$,  $\nu_i:\nn\rightarrow U_i$, and $\omega_i:\nn\rightarrow W_i$ are the state, output,  external input, and internal input signals, respectively.
System $\Sigma_i$ is called \emph{deterministic} if $\text{card}(f_i(x_i,u_i,w_i))\leq1$, $\forall x_i\in  X_i, \forall  u_i\in  U_i, \forall  w_i \in  W_i$, and \emph{non-deterministic} otherwise. System $\Sigma_i$ is called \emph{discrete} if $ X_i, U_i, W_i$ are finite sets, and \emph{continuous} otherwise. 

Throughout the paper, we assume that each subsystem $\Sigma_i$ is affected by \emph{finitely} many neighbors.
For each $i\in \N$, the set of \emph{in-neighbors} of $\Sigma_i$ is denoted by $\n_i \subset \N \setminus \{i\}$, i.e. the set of subsystems $\Sigma_j$, $j\in\n_i$, directly  influencing $\Sigma_i$.
On the other hand, the set of \emph{out-neighbors} of $\Sigma_i$, denoted by $\m_i \subset \N  \setminus \{i\}$, is the set of $\Sigma_j$, $j\in\m_i$, directly affected by $\Sigma_i$.   
Sets $\n_i$ and $\m_i$ are finite, though not necessarily uniformly.
Formally, the input-output structure of each subsystem $\Sigma_i$, $i\in \N$, is given by
\begin{align}	
	\label{internalinput}	
	w_i &= (w_{ij})_{j\in\n_i}\in  W_i:=\prod_{j\in{\n_i}}W_{ij},\\
	\label{output}	
	y_i &=(y_{ij})_{j\in(i\cup{\m_i})}\in Y_i:=\prod_{ j\in (i\cup{\m_i})} \!\!Y_{ij},\\
	\label{outputfunction}
	h_i(x_i) &=(h_{ij}(x_i))_{j\in(i\cup{\m_i})}, 
\end{align}
with $w_{ij} \in W_{ij}$, $y_{ij} =h_{ij}(x_i)\in Y_{ij} $.
The outputs $y_{ii}$ are considered as external ones, whereas $y_{ij}$, $j\in\m_i$, are interpreted as internal ones which are used to construct an interconnection of subsystems. 


In the sequel, we denote by $\ba$ the Banach space of all uniformly bounded sequences $s=(s_i)_{i\in\N} \in\ba$, where $s_i \in \R^{n_i}$ denotes the $i$-th position of a sequence $s\in\ba$. The $\ba$ space is defined as  
\begin{align}\label{infnorm}
	\ba(\N,(n_i)) \Let \left\{s=(s_i)_{i\in\N}: s_i \in \R^{n_i}, \sup_{i\in\N} |s_i| < \infty \right \},
\end{align}	
endowed with the norm  $\Vert s \Vert \Let \sup_{i\in\N} |s_i|$.
Moreover, we use $\ba_+$ to denote the positive cone in $\ba$ consisting of all vectors $s\in\ba$ with $s_i\ge0,i\in\N$. We denote by $\textup{int}(\ba_+)$ the interior of $\ba_+$.

Now, we are ready to provide a formal definition of the infinite network.
\begin{definition}
	\label{interconnectedsystem} 
	Consider subsystems $\Sigma_i = (X_i,U_i,W_i,f_i,Y_i,h_i),~i\in \N,$ with input-output structure given by \eqref{internalinput} to \eqref{outputfunction}.
	An infinite network is formally a tuple $\Sigma = (X,U,f,Y,h)$, where 
	$X = \{ x = (x_i)_{i\in\N} : x_i \in X_i\}$, $U = \{ u = (u_i)_{i\in\N} : u_i \in U_i\}$, $f(x,u) =\{(x^+_i)_{i\in\N}|x_i^+\in f_i(x_i,u_i,w_i)\}$, $Y =\prod_{i\in\N}Y_{ii}$, $h(x) = (h_{ii}(x_i))_{i\in\N}$.
	A \emph{concrete} infinite network $\Sigma = (X,U,f,Y,h)$, denoted by $\Sigma=\mathcal{I}(\Sigma_i)_{i\in\N}$, consists of infinitely many \emph{continuous} subsystems $\Sigma_i$, with the interconnection variables constrained by 
	\begin{align}\label{const}
		\forall i\in\N,\forall j\in\n_i,w_{ij}=y_{ji},Y_{ji}\subseteq W_{ij}.
	\end{align}
	An \emph{abstract} infinite network $\hat \Sigma = (\hat X,\hat U,\hat f,\hat Y,\hat h)$, denoted by $\hat \Sigma=\mathcal{I}(\hat \Sigma_i)_{i\in\N}$, is composed of infinitely many \emph{discrete} subsystems, with the interconnection variables constrained by
	\begin{align}\label{const1}
		\forall i\in\N,\forall j\in\n_i, | \hat y_{ji} - \hat w_{ij} |  \leq  \phi_{ij},[\hat {{Y}}_{ji}]_{\phi_{ij}}  \subseteq  \hat {{W}}_{ij},
	\end{align}	
	where $\phi_{ij} \in \R_{\geq 0}$ is an internal input quantization parameter designed later (cf. Definition \ref{def:sym}).
\end{definition}
{Throughout the paper, we assume that $f(x,u) \in X$ for all $(x,u) \in X \times U$, which ensures that the infinite network is well-posed.}

\begin{remark}
	Note that in Definition~\ref{interconnectedsystem}, the interconnection constraint in~\eqref{const} for the concrete network is different from~\eqref{const1} for the abstract network.
	For a network of symbolic models,  we allow for possibly different granularities of finite internal input sets $\hat {W}_{ij}$ and output sets $\hat {Y}_{ji}$, and introduce parameters $\phi_{ij}$ in \eqref{const1} for having a well-posed interconnection.
	The values of $\phi_{ij}$ will be designed later in Definition \ref{def:sym} while constructing local symbolic models of subsystems.
	\hfill$\diamond$ 
\end{remark}

%

\subsection{Alternating simulation functions} 
Here, we provide a notion of alternating simulation functions which quantitatively relate two infinite networks.
\begin{definition}\label{SF}
	Consider infinite networks $\Sigma\!=(X,U,f,Y,h)$ and $\hat{\Sigma}\!=(\hat{{X}},\hat{{U}},\hat{f},$ $\hat{{Y}},\hat{h})$, where $\hat{{Y}}\subseteq{{Y}}$. For $\varpi \in \mathbb R_{\ge 0}$, a function $\tilde V: X \times \hat{{X}} \rightarrow \mathbb R_{\ge 0}$ is called an $\varpi$-approximate alternating simulation function ($\varpi$-ASF) from $\hat{\Sigma}$ to $\Sigma$, if there exists a function $\alpha \in \mathcal{K_{\infty}}$ such that 
	\begin{enumerate}[leftmargin=*]
		\item[(i)] For all $x \in X$, $\hat x \in \hat {X}$, one has 
		\begin{align}
			\alpha(\Vert h(x) - \hat h(\hat x) \Vert) \leq \tilde V(x,\hat x);
		\end{align}
		\item[(ii)]  For all $x \in X$ and $\hat x \in \hat {X}$ with $\tilde V(x,\hat x) \leq \varpi $, for all $\hat u\in \hat{{U}}$, there exists $u \in U$ such that for all $x^+ \in f(x,u)$, there exists $\hat x^+ \in \hat{f}(\hat{x},\hat{u})$ so that 
		\begin{align}
			\tilde V(x^+,\hat x^+) \leq \varpi.
		\end{align}
	\end{enumerate}
	If there exists an alternating simulation function from $\hat{\Sigma}$ to $\Sigma$, $\hat{\Sigma}$ is called an \emph{abstraction} of $\Sigma$. Additionally, if $\hat{\Sigma}$ is discrete ($\hat{X}$ and $\hat{U}$ are finite sets), $\hat{\Sigma}$ is called a \emph{symbolic model} (or \emph{finite abstraction}) of the concrete network $\Sigma$.
\end{definition}
\begin{remark}\label{remarkaccuracy}
	Definition \ref{SF} implies that the relation $R\subseteq{X}\times \hat{{X}}$ defined by $R=\left\{(x,\hat{x})\in {X}\times \hat{{X}}|\tilde{V}(x,\hat{x})\leq \varpi \right\}$ is an $\hat \varepsilon$-approximate alternating simulation relation, defined in \cite{Pola.2009}, from $\hat \Sigma$ to $\Sigma$ with $\hat \varepsilon = \alpha^{-1}(\varpi)$. As shown in \cite{Pola.2009}, the existence of an $\varpi$-ASF enables us to design a controller for the abstract network $\hat \Sigma$, and refine the controller back to the concrete network $\Sigma$.
	\hfill$\diamond$ 
\end{remark}

\section{Compositional Construction of Symbolic Models}
In this section, we provide a method for compositional construction of an alternating simulation function between two infinite networks $\Sigma=\mathcal{I}(\Sigma_i)_{i\in\N}$ and $\hat{\Sigma}=\mathcal{I}(\hat{\Sigma}_i)_{i\in\N}$. Here, we assume that each pair of subsystems $\Sigma_i=(X_i, W_i, U_i,f_i, Y_i,h_i)$ and 
$\hat{\Sigma}_i\!=(\hat{{X}}_i,\hat{{W}}_i,\hat{{U}}_i,\hat{f}_i,\hat{{Y}}_i,\hat{h}_i)$ admit a local alternating simulation function as defined next.
\begin{definition}\label{def:SFD1}
	Consider subsystems
	$\Sigma_i=(X_i,U_i,W_i, f_i,Y_i,h_i)$ and 
	$\hat{\Sigma}_i=(\hat{{X}}_i,$ $\hat{{U}}_i, \hat{{W}}_i, \hat{f}_i,\hat{{Y}}_i,\hat{h}_i)$
	where $\hat{{W}}_i\subseteq{W_i}$ and $\hat{{Y}}_i\subseteq{Y_i}$.
	Given $\varpi_i \in \R_{\ge 0}$, a function $V_{i} : X_i \times \hat{X}_i \rightarrow \R_{\ge 0}$ is called a local $\varpi_i$-ASF from $\hat{\Sigma}_i$ to $\Sigma_i$,
	if there exist a constant $\vartheta_i \in \mathbb R_{\ge 0}$, and  functions $\underline{\alpha}_i, \overline{\alpha}_i \in \mathcal{K_{\infty}}$ such that 
	\begin{enumerate}[leftmargin=*]
		\item[(i)] For all $x_i \in X_i$, all $\hat x_i \in \hat {X}_i$, one has
		\begin{align}\label{localineq1}
			\underline{\alpha}_i(| h_i(x_i) - \hat h_i(\hat x_i) |) \leq V_{i}(x_i,\hat x_i)\leq{\overline{\alpha}}_i  (|x_i -\hat{x}_i|).
		\end{align}
		\item[(ii)] For all $x_i \in X_i$, all $\hat x_i \in \hat {X}_i$ with $V_{i}(x_i,\hat x_i) \leq \varpi_i$, for all $w_i \in W_i$, all $ \hat w_i \in \hat{{W}}_i$ with $| w_i- \hat w_i | \leq \vartheta_i$, for all $ \hat u_i \in \hat{{U}}_i$, there exists $u_i \in U_i$ such that for all $x_{i}^+ \in f_i(x_i,u_i,w_i)$, there exists $\hat x_{i}^+ \in \hat{f}_i(\hat{x}_i,\hat{u}_i,\hat{w}_i)$ so that 
		\begin{align}
			V_{i}(x_{i}^+,\hat x_{i}^+) \leq \varpi_i.
		\end{align}
	\end{enumerate}
	If there exists a local alternating simulation function from $\hat{\Sigma}_i$ to $\Sigma_i$, $\hat{\Sigma}_i$ is called an abstraction of $\Sigma_i$. Additionally, if $\hat{\Sigma}_i$ is discrete ($\hat{X}_i$, $\hat{U}_i$, and $\hat{W}_i$ are finite sets), $\hat{\Sigma}_i$ is called a \emph{symbolic model}  (or \emph{finite abstraction}) of the concrete subsystem $\Sigma_i$.
\end{definition}

The next theorem provides a compositional approach for the construction of an alternating simulation function between two infinite networks using the above-defined local alternating simulation functions. 

\begin{theorem}\label{thm:3}
	Consider an infinite network $\Sigma=\mathcal{I}(\Sigma_i)_{i\in\N}$.
	Assume that each $\Sigma_i$ and its abstraction $\hat{\Sigma}_i$ admit a local $\varpi_i$-ASF $V_i$ equipped with functions $\underline{\alpha}_i, \overline{\alpha}_i \in \mathcal{K_{\infty}}$ and constants $\varpi_i, \vartheta_i \in\R_{\ge0}$
	as in Definition \ref{def:SFD1}. Suppose that there exist $\underline{\alpha},\overline{\alpha} \in \Kinf$, and constants $\underline{\varpi}, \varpi\in\R_{\ge0}$ such that for each $i\in\N$ 
	\begin{align}\label{mainineq1}
		\underline{\alpha} \leq\underline{\alpha}_i\leq \overline{\alpha}_i\leq\overline{\alpha},\\ \label{mainineq2}
		\underline{\varpi} \leq {\varpi_i} \leq {\varpi}.
	\end{align}For each $i \in \N$ and $j \in \n_i$, let functions $\underline\alpha_j$, constants $\varpi_j, \vartheta_i$, and constants $\phi_{ij}$ as in \eqref{const1} satisfy the following inequality
	\begin{align} \label{compoquaninit}
		{\underline\alpha}^{-1}_{j}(\varpi_j) 	+ \phi_{ij} \leq \vartheta_i .
	\end{align}
	Then, function 
	\begin{align}\label{defVinit}
		\tilde{V}&(x,\hat{x})\Let\sup\limits_{i\in\N}\{ \frac{{\varpi}}{\varpi_i} V_i(x_{i},\hat{x}_{i})\},
	\end{align}
	is well-defined and it is an $\varpi$-ASF from  $\hat{\Sigma}=\mathcal{I}(\hat{\Sigma}_i)_{i\in\N}$ to $\Sigma=\mathcal{I}(\Sigma_i)_{i\in\N}$.
\end{theorem}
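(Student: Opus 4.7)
The plan is to verify the three requirements in Definition~\ref{SF} for $\tilde V$ in sequence, with the key bridge being the inequality~\eqref{compoquaninit} that converts a bound on each local Lyapunov-like term into a bound on the internal input mismatch $|w_i - \hat w_i|$, so that the local alternating simulation property of each pair $(\Sigma_i,\hat\Sigma_i)$ can be triggered. I would take $\alpha = \underline\alpha$ as the $\mathcal{K}_\infty$ function in Definition~\ref{SF}(i).

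First I would dispose of well-definedness. Since $x,\hat x$ lie in the $\ell^\infty$-type state space and each local function satisfies $V_i(x_i,\hat x_i)\le\overline\alpha_i(|x_i-\hat x_i|)\le\overline\alpha(|x_i-\hat x_i|)$ by~\eqref{localineq1} and~\eqref{mainineq1}, the sequence $V_i(x_i,\hat x_i)$ is uniformly bounded; combined with $\varpi/\varpi_i\le\varpi/\underline\varpi$ from~\eqref{mainineq2} (implicitly $\underline\varpi>0$), the supremum in~\eqref{defVinit} is finite. Next, for condition (i), I observe that the output map satisfies $|h_{ii}(x_i)-\hat h_{ii}(\hat x_i)|\le|h_i(x_i)-\hat h_i(\hat x_i)|$ by~\eqref{outputfunction}, so applying $\underline\alpha\le\underline\alpha_i$ and the local lower bound gives $\underline\alpha(|h_{ii}(x_i)-\hat h_{ii}(\hat x_i)|)\le V_i(x_i,\hat x_i)\le\frac{\varpi}{\varpi_i}V_i(x_i,\hat x_i)$ (since $\varpi/\varpi_i\ge 1$); taking supremum over $i$ and using monotonicity of $\underline\alpha$ yields $\underline\alpha(\Vert h(x)-\hat h(\hat x)\Vert)\le\tilde V(x,\hat x)$.

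The substantive step is condition (ii). Given $\tilde V(x,\hat x)\le\varpi$, dividing by $\varpi/\varpi_i$ shows $V_i(x_i,\hat x_i)\le\varpi_i$ for every $i$. To invoke the local property I must verify the hypothesis $|w_i-\hat w_i|\le\vartheta_i$ on the interconnection variables. Using the concrete constraint $w_{ij}=y_{ji}$ from~\eqref{const} and the abstract constraint $|\hat y_{ji}-\hat w_{ij}|\le\phi_{ij}$ from~\eqref{const1}, the triangle inequality gives
\begin{align*}
|w_{ij}-\hat w_{ij}| \le |h_{ji}(x_j)-\hat h_{ji}(\hat x_j)| + \phi_{ij} \le |h_j(x_j)-\hat h_j(\hat x_j)| + \phi_{ij}.
\end{align*}
The local lower bound~\eqref{localineq1} and $V_j(x_j,\hat x_j)\le\varpi_j$ yield $|h_j(x_j)-\hat h_j(\hat x_j)|\le\underline\alpha_j^{-1}(\varpi_j)$, so the hypothesis~\eqref{compoquaninit} forces $|w_{ij}-\hat w_{ij}|\le\vartheta_i$; since $\n_i$ is finite, taking the maximum over $j\in\n_i$ gives $|w_i-\hat w_i|\le\vartheta_i$.

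With this in hand, for any $\hat u\in\hat U$ I apply Definition~\ref{def:SFD1}(ii) componentwise: for each $i$, local condition~(ii) supplies a $u_i\in U_i$ such that every $x_i^+\in f_i(x_i,u_i,w_i)$ is matched by some $\hat x_i^+\in\hat f_i(\hat x_i,\hat u_i,\hat w_i)$ with $V_i(x_i^+,\hat x_i^+)\le\varpi_i$. Setting $u=(u_i)_{i\in\N}$ and $\hat x^+=(\hat x_i^+)_{i\in\N}$, multiplying by $\varpi/\varpi_i$ and taking the supremum gives $\tilde V(x^+,\hat x^+)\le\varpi$, as required. The part I expect to be most delicate is the order-of-quantifiers in the last step: I must choose $u_i$ independently of each concrete successor $x^+$, which is exactly what the local definition provides, but I need to be careful that the chosen $u$ does not depend on the downstream selection of $\hat x^+$, and that the chosen $\hat x^+$ is a valid element of $\hat f(\hat x,\hat u)$ in the interconnected sense --- that is, the $\hat w_{ij}$ used in selecting $\hat x_i^+$ must satisfy~\eqref{const1} with the already-selected $\hat x_j^+$, which holds because the selection is made with the current $\hat w$ taken from the given interconnected input, not with a future one.
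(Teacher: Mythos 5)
Your proposal is correct and follows essentially the same route as the paper's proof: well-definedness via the uniform upper bound $\overline\alpha$ and the $\ell^\infty$ structure, condition (i) with $\alpha=\underline\alpha$ via the componentwise output bound, and condition (ii) by first establishing $|w_i-\hat w_i|\le\vartheta_i$ through the triangle inequality, \eqref{const}, \eqref{const1}, \eqref{localineq1}, and \eqref{compoquaninit}, then invoking the local ASF componentwise. Your closing remark on the quantifier order and the implicit requirement $\underline\varpi>0$ are points the paper leaves tacit, but they do not change the argument.
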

\begin{proof}
	First we show that function $\tilde{V}$ constructed as in \eqref{defVinit} is well-defined. Note that for all $x\in X$ and for all $\hat{x}\in \hat{X}$ we have
	\begin{align*}\notag
		\tilde{V}(x,\hat{x})&\Let\sup\limits_{i\in\N}\{ \frac{\varpi}{\varpi_i} V_i(x_{i},\hat{x}_{i}) \}
		\stackrel{\eqref{localineq1}} \leq \varpi \sup\limits_{i\in\N}\{ \varpi^{-1}_i  \overline{\alpha}_i  (|x_i -\hat{x}_i|) \} \\
		&\leq \varpi \sup\limits_{i\in\N}\{\varpi^{-1}_i  \overline{\alpha}_i  (|x_i| +|\hat{x}_i|) \} 
		\stackrel{\eqref{mainineq2}}\leq\varpi \sup\limits_{i\in\N}\{ \underline{\varpi}^{-1} \overline{\alpha} (|x_i| +|\hat{x}_i|) \} \\
		&\leq \frac{\varpi}{\underline{\varpi}}  \overline{\alpha}\sup\limits_{i\in\N}\{ |x_i| +|\hat{x}_i| \} 
		\leq \frac{\varpi}{\underline{\varpi}}  \overline{\alpha}(\sup\limits_{i\in\N}\{ |x_i| \}+\sup\limits_{i\in\N}\{|\hat{x}_i| \})\\
		&\stackrel{\eqref{infnorm}}\leq \frac{\varpi}{\underline{\varpi}}  \overline{\alpha}(\Vert x\Vert+\Vert \hat{x}\Vert )<\infty.
	\end{align*} 
	Next, we show that there exists ${\alpha} \in \mathcal{K}_{\infty}$ such that condition (i) of Definition \ref{SF} holds. Consider any $x\in X$, $\hat{x} \in {\hat{X}}$, one gets
	\begin{align*}\notag
		\Vert h(x)-\hat{h}(\hat{x})\Vert &=\sup\limits_{i\in\N}\{| h_{ii}(x_i)-\hat{h}_{ii}(\hat{x}_i)|\} 
		\stackrel{\eqref{outputfunction}}\leq\sup\limits_{i\in\N}\{| h_{i}(x_i)-\hat{h}_{i}(\hat{x}_i)|\}\\
		&\stackrel{\eqref{localineq1}} \leq\sup\limits_{i\in\N}\{\underline{\alpha}^{-1}_i(V_i(x_{i},\hat{x}_{i}))\} = \sup\limits_{i\in\N}\{ \underline{\alpha}^{-1}_i (\varpi_{i}\varpi^{-1}_i V_i(x_{i},\hat{x}_{i}))\}\\
		&\stackrel{\eqref{mainineq1}\eqref{mainineq2}} \leq \underline\alpha^{-1}\sup\limits_{i\in\N}\{ \varpi \varpi^{-1}_{i} (V_i(x_{i},\hat{x}_{i}))\}
		\stackrel{\eqref{defVinit}}=  \underline\alpha^{-1}(\tilde{V}(x,\hat{x})).
	\end{align*}
	Hence, condition (i) holds with $\alpha\Let \underline\alpha$.  
	Next, we show that condition (ii) of Definition \ref{SF} is satisfied.
	Let us consider any 
	$ x = (x_i)_{i\in\N} \in  X$ and $\hat x= (\hat x_i)_{i\in\N} \in\hat{{X}}$ such that $\tilde{V}(x,\hat x) \leq \varpi$. It can be seen that from the construction of $\tilde{V}$ in \eqref{defVinit}, we have $V_i(x_{i},\hat{x}_{i}) \leq \varpi_i$, for each $i \in \N$. For each pair of subsystems $\Sigma_i$ and $\hat{\Sigma}_i$, the internal inputs satisfy the following inequality
	\begin{align*} 
		| w_i- \hat{w}_i | =& \max\limits_{j \in \n_i}\{| w_{ij}- \hat{w}_{ij} |\} \stackrel{\eqref{const}}= \max\limits_{j \in \n_i}\{| y_{ji}-\hat{y}_{ji}+\hat{y}_{ji}-\hat{w}_{ij}|\} \\ 	\stackrel{\eqref{const1}}\leq &\max\limits_{j \in \n_i}\{| y_{ji}-\hat{y}_{ji}| + \phi_{ij}\}  
		\leq \max\limits_{j \in \n_i}\{| h_{j}(x_j) - \hat{h}_{j}(\hat{x}_j)|  + \phi_{ij}\}\\
		\stackrel{\eqref{localineq1}}\leq&\max\limits_{j \in \n_i}\{{\underline\alpha}^{-1}_{j} V_j(x_{j},\hat{x}_{j})  + \phi_{ij}\}
		\leq\max\limits_{j \in \n_i}\{{\underline\alpha}^{-1}_{j}( \varpi_j)  + \phi_{ij}\}.		
	\end{align*}
	Using \eqref{compoquaninit}, one has $| w_i- \hat{w}_i| \leq \vartheta_i$ for each $i \in \N$. 	Therefore, by Definition \ref{def:SFD1} for each pair of subsystems $\Sigma_i$ and $\hat\Sigma_i$, one has for any $\hat{u}_i\in \hat{ U}_i$, there exists $u_i \in U_i$ such that for any $x_{i}^+ \in f_i(x_i,u_i,w_i)$, there exists $\hat{x}_{i}^+ \in \hat{f}_i(\hat x_i,\hat u_i,\hat w_i)$ such that  $V_i(x_{i}^+,\hat{x}_{i}^+) \leq \varpi_i$. 
	As a result, we get for any $\hat u=(\hat u_i)_{i\in\N}\in\hat{{U}}$, there exists $u = (u_i)_{i\in\N} \in U$, such that for any $x^+= (x_i^+)_{i\in\N} \in f(x,u)$, there exists $\hat{x}^+ = (\hat{x}_i^+)_{i\in\N}\in \hat{f}(\hat{x},\hat{u})$ such that $\tilde{V}(x^+, \hat{x}^+) = \sup\limits_{i\in\N}\{ \frac{\varpi}{\varpi_i} V_i(x_{i},\hat{x}_{i}) \} \leq \varpi$. 
	Therefore, condition (ii) of Definition \ref{SF} is satisfied with $\varpi = \sup\limits_{i\in\N} \varpi_i$. Therefore, we conclude that $\tilde{V}$ is an $\varpi$-ASF from  $\hat{\Sigma}=\mathcal{I}(\hat{\Sigma}_i)_{i\in\N}$ to $\Sigma=\mathcal{I}(\Sigma_i)_{i\in\N}$.
\end{proof}
\begin{remark}
	{Note that practically speaking, the computation of a symbolic model consisting of infinite subsystems requires an infinite memory usage, which prevents us from having a central entity to handle the construction of a symbolic model for the overall network.}
	However, the proposed compositional framework is still needed to formally establish the alternating simulation relation between infinite networks in terms of preserving desired properties.	On this basis, one can develop decentralized (or distributed) schemes to solve controller synthesis problems compositionally using symbolic models of subsystems. 
	\hfill$\diamond$ 
\end{remark}

Next we provide a method to construct local symbolic models together with corresponding local alternating simulation functions for the concrete subsystems under \emph{incremental} stability-type conditions.  
\subsection{Construction of local symbolic models}\label{1:IV}
In this subsection, we present a method to construct a symbolic model $\hat{\Sigma}_i$, together with the corresponding local alternating simulation function, for a given finite-dimensional deterministic subsystem $\Sigma_i$.
Consider a subsystem $\Sigma_i=(X_i, U_i, W_i, f_i, Y_i,h_i)$ as in \eqref{eq:2}.
{Assume that there exists $\ell\in\K$ such that the output map $h_i$ satisfies $| h_i(x_i) - h_i(x_i')|\leq \ell(|x_i-x_i'|)$ for all $x_i, x_i'\in  X_i$.}
Additionally, let $\Sigma_i$ be incrementally input-to-state stable ($\delta$-ISS) \cite{angeli2002lyapunov} as defined next.

\begin{definition}\label{iss} 
	System $\Sigma_i$ is incrementally input-to-state stable ($\delta$-ISS) if there exist a so-called $\delta$-ISS Lyapunov function $ \V_i: X_i\times  X_i \to \mathbb{R}_{\geq0} $, and functions ${\underline{\psi}_i}, {\overline{\psi}_i}, \kappa_i,\rho_{w_i} , \rho_{u_i} \in \mathcal{K}_{\infty}$, with $\kappa_i<\I$ such that for all $x_i,x'_i\in X_i$,  all $w_i, w'_i\in W_i$, and all $u_i, u'_i\in U_i$  
	\begin{align}\label{c1}
		\underline{\psi}_i (|x_i-x'_i| ) \leq &\V_i(x_i,x'_i)\leq \overline{\psi}_i (| x_i-x'_i| ),\\ \label{c2}
		\V_i(f_i( x_i,u_i,w_i),f_i( x'_i,u'_i,w'_i))\leq & \kappa_i (\V_i(x_i,{x_i'}))+\varrho_{w_i}(| w_i-w'_i|)+\varrho_{u_i}(| u_i-u'_i| ).
	\end{align}We further assume that there exists $\hat \gamma_i \in \Kinf$ such that for all $x_i,x_i',x_i'' \in {X}_i$
	\begin{align}\label{tinq} 
		\V_i(x_i,x_i')\leq \V_i(x_i,x_i'')+ \hat\gamma_i(| x_i'-x_i''|).
	\end{align}
\end{definition}
Note that a typical $\delta$-ISS Lyapunov function  \cite{angeli2002lyapunov} does not require condition \eqref{tinq}. However, in most real-world applications, the state set ${X}_i$ of a concrete subsystem is restricted to a compact subset of $\mathbb{R}^n$, and hence, condition~\eqref{tinq} is \emph{not} restrictive \cite{zamani2014symbolic}.

Now, we construct a symbolic model $\hat\Sigma_i$ of a $\delta$-ISS subsystem $\Sigma_i$ as follows.
\begin{definition}\label{def:sym}
	Let $\Sigma_i=(X_i, U_i, W_i,f_i, Y_i,h_i)$ be  $\delta$-ISS, where $X_i, U_i, W_i$ are assumed to be finite unions of boxes. Consider a symbolic model $\hat{\Sigma}_i=(\hat{{X}}_i,\hat{{U}}_i,\hat{{W}}_i,$ $\hat{f}_i,\hat{{Y}}_i,\hat{h}_i)$ with a tuple of parameters $q_i = (\eta^x_i,\eta^u_i,\phi_i)$, where:
	\begin{enumerate}[label={$\bullet$},leftmargin=*]
		\item ${\hat{X}_i}=[X_i]_{\eta^x_i}$, where $0\leq \eta^x_i\leq\emph{span}( X_i)$ is the state set quantization parameter; 
		\item ${\hat{U}_i}=[ U_i]_{\eta^u_i}$, where $0\leq \eta^u_i \leq\emph{span}( U_i)$ is the external input set quantization parameter;
		\item $\hat{W}_i =[{W}_i]_{{\phi}_i}$,  where $\phi_i$, satisfying $0 \leq | \phi_i | \leq  \emph{span}({W}_i)$, is the internal input set quantization parameter;
		\item $\hat{x}^+_i \in \hat{f}_i(\hat{x}_i,\hat{u}_i,\hat{w}_i)$ if and only if $|\hat{x}^+_i - f_i(\hat{x}_i,\hat{u}_i,\hat{w}_i)|  \leq \eta^x_i$;
		\item $\hat{Y}_i=\{h_i(\hat{x}_i)\,\,|\,\,\hat{x}_i \in \hat{X}_i\} $;
		\item $\hat{h}_i=h_i$.
	\end{enumerate}
\end{definition}

Now we are ready to establish a local alternating simulation relation between a $\delta$-ISS subsystem $\Sigma_i$ and its symbolic model $\hat \Sigma_i$ constructed as in Definition~\ref{def:sym} with suitably chosen quantization parameters.  
\begin{theorem}\label{thm:2}
	Let $\Sigma_i$ be $\delta$-ISS with the corresponding $\delta$-ISS Lyapunov function $\V_i$ satisfying~\eqref{c1} to~\eqref{tinq}  with functions   
	${\underline{\psi}_i}, {\overline{\psi}_i}, \kappa_i,\rho_{w_i}, \rho_{u_i}, \hat{\gamma}_i \in \Kinf$. 	For design parameters $\varpi_i$ and $\vartheta_i$, let $\hat{\Sigma}_i$ be a symbolic model constructed as in Definition~\ref{def:sym} with the quantization parameters  $\eta^x_i$ and  $\eta^u_i$ satisfying 
	\begin{align} \label{secquantinit}	
		\eta^x_i \leq  {\hat{\gamma}_i^{-1}[(\I-\kappa_i) (\varpi_i)-\rho_{w_i}(\vartheta_i)-\rho_{u_i}(\eta^u_i)]}.
	\end{align}	  
	Then, $\V_i$ is {a local $\varpi_i$-ASF both from $\hat{\Sigma}_i$ to $\Sigma_i$ and from $\Sigma_i$ to $\hat{\Sigma}_i$}. 
	%
\end{theorem}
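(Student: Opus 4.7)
The plan is to verify the two clauses of Definition \ref{def:SFD1} directly, exploiting the $\delta$-ISS Lyapunov inequalities \eqref{c1}--\eqref{tinq} together with the structural choices made in Definition \ref{def:sym}. Since the statement claims that $\V_i$ serves as a local $\varpi_i$-ASF in \emph{both} directions, I will set up the argument symmetrically and only highlight where the two directions differ.

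For clause (i), I would choose $\overline{\alpha}_i = \overline{\psi}_i$ directly from \eqref{c1}. For the lower bound, since $\hat h_i = h_i$ and $h_i$ is assumed to satisfy $|h_i(x_i)-h_i(x_i')|\leq \ell(|x_i-x_i'|)$, combining with the left inequality of \eqref{c1} gives $|h_i(x_i)-\hat h_i(\hat x_i)| \leq \ell\!\circ\!\underline{\psi}_i^{-1}(\V_i(x_i,\hat x_i))$, so setting $\underline{\alpha}_i = (\ell\!\circ\!\underline{\psi}_i^{-1})^{-1}$ (extended appropriately to $\Kinf$) yields the desired bound.

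For clause (ii) from $\hat\Sigma_i$ to $\Sigma_i$: given $\hat u_i \in \hat U_i \subseteq U_i$, pick $u_i = \hat u_i$, so that the input-mismatch term in \eqref{c2} vanishes. Let $x_i^+ = f_i(x_i,u_i,w_i)$ be arbitrary. Define $\hat x_i^+$ to be a point of $\hat X_i = [X_i]_{\eta^x_i}$ within infinity-norm distance $\eta^x_i$ of $f_i(\hat x_i,\hat u_i,\hat w_i)$; by Definition \ref{def:sym} such a point exists and lies in $\hat f_i(\hat x_i,\hat u_i,\hat w_i)$. Then by \eqref{tinq} and \eqref{c2},
\begin{align*}
\V_i(x_i^+,\hat x_i^+) &\leq \V_i\bigl(x_i^+, f_i(\hat x_i,\hat u_i,\hat w_i)\bigr) + \hat\gamma_i\bigl(|f_i(\hat x_i,\hat u_i,\hat w_i)-\hat x_i^+|\bigr)\\
&\leq \kappa_i(\V_i(x_i,\hat x_i)) + \rho_{w_i}(|w_i-\hat w_i|) + \hat\gamma_i(\eta^x_i)\\
&\leq \kappa_i(\varpi_i) + \rho_{w_i}(\vartheta_i) + \hat\gamma_i(\eta^x_i),
\end{align*}
which is at most $\varpi_i$ thanks to \eqref{secquantinit} (in fact with room to spare by $\rho_{u_i}(\eta^u_i)$).

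For the reverse direction from $\Sigma_i$ to $\hat\Sigma_i$: given any $u_i \in U_i$, pick $\hat u_i \in \hat U_i = [U_i]_{\eta^u_i}$ with $|u_i-\hat u_i|\leq \eta^u_i$; then the same chain of inequalities yields the extra term $\rho_{u_i}(\eta^u_i)$, and the full strength of \eqref{secquantinit} is required to conclude $\V_i(x_i^+,\hat x_i^+)\leq \varpi_i$. The deterministic nature of $\Sigma_i$ ensures $x_i^+$ is unique, so the existential/universal structure in Definition \ref{def:SFD1} is trivially met. I do not foresee a major obstacle: the key subtlety is that the condition \eqref{tinq} is precisely what allows us to absorb the quantization error $\eta^x_i$ into an additive $\hat\gamma_i(\eta^x_i)$ term after applying the $\delta$-ISS Lyapunov decay, and \eqref{secquantinit} is designed so the three perturbation terms jointly fit inside the contraction margin $(\I-\kappa_i)(\varpi_i)$.
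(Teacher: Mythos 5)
Your proposal is correct and follows essentially the same route as the paper's proof: clause (i) via the Lipschitz bound on $h_i$ together with \eqref{c1} (your $\underline{\alpha}_i=(\ell\circ\underline{\psi}_i^{-1})^{-1}$ is exactly the paper's $\underline{\psi}_i\circ\ell^{-1}$), and clause (ii) via \eqref{tinq} plus \eqref{c2} to absorb the quantization error into $\hat{\gamma}_i(\eta^x_i)$, choosing $u_i=\hat u_i$ in one direction and a nearest $\hat u_i\in[U_i]_{\eta^u_i}$ in the other so that \eqref{secquantinit} closes the estimate. Your remarks about the slack $\rho_{u_i}(\eta^u_i)$ in the forward direction and the trivialization of the quantifier structure by determinism match the paper's treatment.
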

\begin{proof}
	First, we show that condition (i) in Definition \ref{def:SFD1} holds.
	Given the Lipschitz assumption on $h_i$ and by \eqref{c1}, for all $x_i \in {X}_i$ and $\hat{x}_i \in {\hat{X}_i}
	$, one gets the left inequality of \eqref{localineq1} as
	\begin{align}\notag
		|h_i(x_i)-\hat{h}_i(\hat{x}_i)|\leq \ell(| x_i-\hat{x}_i|)\leq\ell\circ\underline{\psi}_i^{-1}(\mathcal{V}(x_i,\hat{x}_i)),
	\end{align}	
	and the right inequality \eqref{localineq1} holds with $
	\mathcal{V}(x_i,\hat{x}_i)  \leq \overline{\psi}_i (| x_i-\hat{x}_i|)$.
	Hence, condition (i) in Definition \ref{def:SFD1} holds with $\underline{\alpha}_i=\ul\psi_i\circ\ell^{-1}$ and $\overline{\alpha}_i=\overline{\psi}_i$. 
	Now we show condition (ii) in Definition \ref{def:SFD1}. From \eqref{tinq}, {for all $x_i\in {X}_i, \hat{x}_i \in {\hat{X}_i}$, for all $u_i\in{{U}_i}, \hat{u}_i \in {\hat{U}_i}$, and for all $w_i \in {W_i}, \hat{w}_i \in {\hat{W}_i}$}, 
	we have for any $\hat{x}_i^+\in\hat{f}_i(\hat{x}_i,\hat{u}_i,\hat{w}_i)$: 
	\begin{align*}
		\mathcal{V}_i(x^+_i,\hat x^+_i)) \leq\mathcal{V}_i(x^+_i,f_i(\hat{x}_i, \hat{u}_i,\hat{w}_i)) +\hat{\gamma}_i(|\hat x^+_i - f_i(\hat{x}_i,\hat{u}_i,\hat{w}_i)|),
	\end{align*}
	where
	$x^+_i = f_i(x_i,u_i,w_i)$.
	{By Definition~\ref{def:sym}, $\hat{x}^+_i \in \hat{f}_i(\hat{x}_i,\hat{u}_i,\hat{w}_i)$ implies $|\hat{x}^+_i - f_i(\hat{x}_i,\hat{u}_i,\hat{w}_i)|  \leq \eta^x_i$, thus, 	the above inequality reduces to}
	\begin{align*}
		\mathcal{V}_i&(x^+_i,\hat x^+_i)\leq\mathcal{V}_i(x^+_i,f_i(\hat{x}_i,\hat{u}_i,\hat{w}_i))+\hat{\gamma}_i(\eta^x_i).
	\end{align*}
	Observe that by \eqref{c2}, we obtain 
	\begin{align*}
		\mathcal{V}_i(x^+_i,f_i(\hat{x}_i,\hat{u}_i,\hat{w}_i))\leq\kappa_i(\mathcal{V}_i(x_i,\hat{x}_i))+\varrho_{w_i}(| w_i- \hat w_i|)+\varrho_{u_i}(| u_i- \hat u_i| ).
	\end{align*}	
	Hence, {for all $x_i\in {X}_i, \hat{x}_i \in {\hat{X}_i}$, for all $ u_i\in{{U}_i}, \hat{u}_i \in {\hat{U}_i}$, and for all $w_i \in {W_i}, \hat{w}_i \in {\hat{W}_i}$}, one obtains
	\begin{align} \label{veq}
		\mathcal{V}_i (x^+_i,\hat x^+_i) \leq \kappa_i(\mathcal{V}_i(x_i,\hat{x}_i))+\varrho_{w_i}(| w_i- \hat w_i|)+\varrho_{u_i}(| u_i- \hat u_i|)+\hat{\gamma}_i(\eta^x_i),
	\end{align}
	for any $\hat{x}_i^+\in\hat{f}_i(\hat{x}_i,\hat{u}_i,\hat{w}_i)$.
	Take any $x_i \in {X}_i$ and any $\hat x_i \in \hat {{X}}_i$ satisfying $\mathcal{V}_i(x_i,\hat x_i) \leq \varpi_i$, and any $w_i\in{{W}_i}$ and $\hat w_i \in \hat{{W}_i}$ such that $|w_i-\hat{w}_i|\leq \vartheta_i$. 
	For any $\hat{u}_i$, choose $u_i=\hat{u}_i$. Then, by combining \eqref{veq} with \eqref{secquantinit},
	we get that for ${x}_i^+ = {f}_i({x}_i,\hat{u}_i,{w}_i)$, there exists $\hat{x}_i^+ \in\hat{f}_i(\hat{x}_i,\hat{u}_i,\hat{w}_i)$ such that
	\begin{align}
		\mathcal{V}_i (x^+_i,\hat x^+_i) \leq \kappa_i(\varpi_i) + \varrho_{w_i}(\vartheta_i)+ \hat{\gamma}_i(\eta^x_i) \leq \varpi_i .
	\end{align}
	This implies that condition (ii) in Definition \ref{def:SFD1} is satisfied, and thus, $\V_i$ is a local $\varpi_i$-ASF from $\hat{\Sigma}_i$ to $\Sigma_i$. Similarly, we can also show that $\mathcal{V}_i$ is a local $\varpi_i$-ASF from $\Sigma_i$ to $\hat{\Sigma}_i$. In particular, by the structure of ${\hat{U}_i}=[ U_i]_{\eta^u_i}$, for any $u_i\in{{U}_i}$, there always exists $\hat{u}_i$ satisfying $|\hat{u}_i-u_i | \leq \eta^u_i$. As a result, for any  $\hat{x}_i^+ \in\hat{f}_i(\hat{x}_i,\hat{u}_i,\hat{w}_i)$, there exists 
	${x}_i^+ = {f}_i({x}_i,{u}_i,{w}_i)$ 
	such that $\mathcal{V}_i (x^+_i,\hat x^+_i) \leq \kappa(\varpi_i) + \varrho_{w_i}(\vartheta_i)+ \varrho_{u_i}(\eta^u_i)+ \hat{\gamma}_i(\eta^x_i) \leq \varpi_i$.
	Therefore, we conclude that $\V_i$ is  a local $\varpi_i$-ASF both from $\hat{\Sigma}_i$ to $\Sigma_i$ and $\Sigma_i$ to $\hat{\Sigma}_i$.
\end{proof}

Given the results of Theorems~\ref{thm:3} and~\ref{thm:2}, one can observe that  inequalities~\eqref{compoquaninit} and~\eqref{secquantinit} are \emph{competing} conditions which may not hold simultaneously.
To resolve this issue, we propose a small-gain type condition ensuring the simultaneous satisfaction of both conditions.
\section{Compositionality Result}
In this section, we employ a small-gain type condition for the infinite network, under which one can always find suitable quantization parameters for the construction of symbolic models so that conditions~\eqref{compoquaninit} and~\eqref{secquantinit} are simultaneously satisfied.

Before stating the main result,  
let us introduce the terminologies that will be used later.
In particular, we recall the notion of strongly connected components (SCCs) of graphs. 
We assume that the infinite network $\Sigma$ is composed of \emph{finitely} many sub-networks, where each of them is an \emph{infinite} network by itself, and the graph associated with each  sub-network is strongly connected \cite{baier2008principles}. 

\subsection{Strongly connected components} 

Consider an infinite network $\Sigma=\mathcal{I}(\Sigma_i)_{i\in\N}$, as defined in Definition \ref{interconnectedsystem}. Hereafter, we denote the directed graph associated with $\Sigma=\mathcal{I}(\Sigma_i)_{i\in\N}$ by $G = (I, E)$, where $I = \N$ is the set of vertices with each vertex $i \in I$ labeled with subsystem $\Sigma_i$, and $E \subseteq I \times I$ is the set of ordered pairs $(i,j)$, $\forall i,j \in I$, with $y_{ji} \neq 0$. 		
Note that given the graph of our infinite network, we can formally define the finite index sets $\n_i$ and $\m_i$ of subsystem $\Sigma_i$, as mentioned in Subsection \ref{infdef}, i.e., $\n_i = \{j \in I| \exists (i,j) \in E\}$ and $\m_i = \{j \in I| \exists (j,i) \in E\}$. 

The SCCs of a directed graph $G$ are {maximal strongly connected subgraphs, i.e., no additional edges or vertices from G can be included in the subgraph without breaking its property of being strongly connected}~\cite{baier2008principles}. 
{Given the structure of an infinite network, we denote by $\bar N \in \mathbb{N}$ the number of SCCs in the network.}
In the sequel, we will denote the graphs of the SCCs in $G$ by $\bar G_k$, $k \in [1;\bar N]$, where $\bar G_k = (I_k, E_k)$ with $I_k = \N$. 
In addition, we define set $\n_{k_i} = \{j \in I_k| \exists (i,j) \in E_k\}$ 
which collects \emph{in-neighbors} of $\Sigma_i$ in $\bar G_k$, i.e., subsystems in the same subnetwork $\bar G_k$ who are \emph{directly} influencing $\Sigma_i$.
On the other hand, we define set $\m_{k_i} = \{j \in I_k| \exists (j,i) \in E_k\}$ which collects \emph{out-neighbors} of $\Sigma_i$ in $\bar G_k$, i.e., subsystems in the same subnetwork $\bar G_k$ that are \emph{directly} influenced by $\Sigma_i$.
Intuitively, $\n_{k_i}$ and $\m_{k_i}$ are the sets of neighboring subsystems of $\Sigma_i, i\in I_k$, in the same SCC.
Note that if we regard each SCC in $G$ as a vertex, the resulting directed graph is \emph{acyclic}. 
We denote by $\tup{BSCC}(G)$ the collection of bottom SCCs $\bar G_k$ of graph $G$ from which no vertex in $G$ outside $\bar G_k$ is reachable. 

In the next subsection, we leverage a small-gain type condition to facilitate the compositional construction of a symbolic model for an infinite network. 		  

\subsection{Small-gain theorem}
Consider an infinite network $\Sigma=\mathcal{I}(\Sigma_i)_{i\in\N}$ associated with a directed graph  $G$. Assume that each $\Sigma_i$ and its symbolic model $\hat{\Sigma}_i$ admit a local $\varpi_i$-ASF $\mathcal{V}_i$ with constants $\kappa_i, \rho_{w_i}, \underline \alpha_i \in \R_{>0}$ (as in Definitions \ref{def:SFD1} and \ref{iss}). Let $\bar G_k$, $k \in [1;\bar N]$, be the SCCs in $G$ with each $\bar G_k$ consisting of $\N$ vertices, where each vertex represents a subsystem. For any $\bar G_k$, we define for each $i,j \in \N$, 
\begin{align}\label{gammadcur}
	~\gamma_{ij} = \left \{ \begin{array}{cc} 
		(1-\kappa_{i})^{-1} \rho_{w_i}\underline \alpha_j^{-1}& \mbox{if } j \in \n_{k_i},\\
		0 & \mbox{otherwise}.
	\end{array}\right.
\end{align}
%
For each SCC $\bar G_k$, we introduce a gain operator $\Gamma_k:\ba_+ \rightarrow \ba_+$ by 
\begin{align}\label{Gam}
	\Gamma_k(s)=\big(\sup_{j\in\N}\{\gamma_{ij}s_j\}\big)_{i\in\N},\quad s\in\ba_+.
\end{align}

We furthermore assume that the following uniformity conditions hold for the constants introduced above.
\begin{assumption}\label{sgassump}
	There are constants $\overline \kappa$, $\overline{\rho}_{w}$, $\underline{\alpha} \in \R_{> 0}$, so that for all $i \in \N$ 
	\begin{align}
		\kappa_{i} \leq \overline \kappa,  \quad\quad  \rho_{w_i} \leq \overline{\rho}_{w}, \quad\quad \underline \alpha_i \geq \underline{\alpha}.
	\end{align}
\end{assumption}
Notice that the above assumption guarantees that the operator $\Gamma_k$ is well-defined. Accordingly, we have the following result recalled from \cite[Proposition 17]{mironchenko2020nonlinear}.
\begin{proposition}\label{sgprop}
	Under Assumption \ref{sgassump}, the following conditions are equivalent: 
	\begin{enumerate}[leftmargin=*]
		\item[(i)] The spectral radius of $\Gamma_k$ satisfies
		\begin{align}\label{SGC1}
			r(\Gamma_k) = \lim\limits_{n\rightarrow \infty} \big( \sup\limits_{j_1,\dots,j_{n+1} \in \N}  \gamma_{j_1j_2}\dots \gamma_{j_nj_{n+1}}  \big)^{1/n} <1.
		\end{align}
		\item[(ii)] There exist a vector $\sigma_k  \in \textup{int}(\ba_+)$ and constant $\lambda_k \in (0,1)$ satisfying
		\begin{align}\label{SGC2}
			\Gamma_k(\sigma_k) \leq \lambda_k \sigma_k. 
		\end{align} 
	\end{enumerate}
\end{proposition}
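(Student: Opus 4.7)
My plan is to prove the two directions of this max-type (tropical) small-gain equivalence separately, relying on three elementary facts about $\Gamma_k$: it is monotone with respect to the componentwise order on $\ba_+$, it is positively homogeneous, and it commutes with componentwise suprema of uniformly bounded families, i.e.\ $\Gamma_k(\sup_{n} s^{(n)}) = \sup_{n} \Gamma_k(s^{(n)})$. The last property follows immediately from the definition of $\Gamma_k$ as a componentwise supremum with nonnegative coefficients $\gamma_{ij}$ and the standard swap $\sup_j \sup_n = \sup_n \sup_j$.

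For the implication (ii) $\Rightarrow$ (i), I would iterate inequality \eqref{SGC2}. Monotonicity gives $\Gamma_k^n(\sigma_k) \leq \lambda_k^n \sigma_k$ for all $n \geq 0$. Because $\sigma_k \in \textup{int}(\ba_+)$, there exist constants $0 < c_1 \leq c_2$ such that $c_1 \leq \sigma_{k,i} \leq c_2$ for every $i \in \N$, and therefore
\begin{equation*}
\Gamma_k^n(\mathbf{1}) \,\leq\, (1/c_1)\,\Gamma_k^n(\sigma_k) \,\leq\, (c_2/c_1)\,\lambda_k^n\,\mathbf{1},
\end{equation*}
where $\mathbf{1}$ denotes the all-ones sequence. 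Unfolding the definition gives $\|\Gamma_k^n(\mathbf{1})\| = \sup_{j_1,\dots,j_{n+1}\in\N} \gamma_{j_1 j_2}\cdots \gamma_{j_n j_{n+1}}$. Taking $n$-th roots and invoking \eqref{SGC1} therefore yields $r(\Gamma_k) \leq \lambda_k < 1$.

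For (i) $\Rightarrow$ (ii), I would construct $\sigma_k$ explicitly. Fix any $\lambda_k \in (r(\Gamma_k), 1)$, extract a uniform geometric bound $\|\Gamma_k^n(\mathbf{1})\| \leq C \lambda_k^n$ from \eqref{SGC1}, and define
\begin{equation*}
\sigma_k \,:=\, \sup_{n \geq 0} \lambda_k^{-n}\,\Gamma_k^n(\mathbf{1}),
\end{equation*}
taken componentwise. The geometric bound guarantees $\sigma_k \in \ba$ with $\|\sigma_k\| \leq C$, while the $n=0$ term gives $\sigma_k \geq \mathbf{1}$, placing $\sigma_k$ in $\textup{int}(\ba_+)$. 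Applying $\Gamma_k$ and using the commutation with suprema,
\begin{equation*}
\Gamma_k(\sigma_k) \,=\, \sup_{n \geq 0} \lambda_k^{-n}\,\Gamma_k^{n+1}(\mathbf{1}) \,=\, \lambda_k \sup_{n \geq 1} \lambda_k^{-n}\,\Gamma_k^n(\mathbf{1}) \,\leq\, \lambda_k \,\sigma_k,
\end{equation*}
which is exactly \eqref{SGC2}.

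The main obstacle is the uniform geometric bound $\|\Gamma_k^n(\mathbf{1})\| \leq C \lambda_k^n$ in the second implication. The spectral radius in \eqref{SGC1} is defined as a Gelfand-type limit of these norms, which in finite dimensions immediately gives such a bound; on the infinite-dimensional cone $\ba_+$, Assumption~\ref{sgassump} is what makes the conversion work, because its uniform control of $\kappa_i$, $\rho_{w_i}$, and $\underline{\alpha}_i$ translates through \eqref{gammadcur} into a uniform bound on the gains $\gamma_{ij}$ and hence on the operator norm of $\Gamma_k$. The secondary delicacy is showing $\sigma_k$ lies in the \emph{interior} of $\ba_+$ rather than just in $\ba_+$, which is why starting the supremum at $n = 0$ (so that $\sigma_k \geq \mathbf{1}$) is essential.
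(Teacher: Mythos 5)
Your argument is correct, and it is worth noting that the paper does not prove this proposition at all: it simply recalls it from the cited reference (\cite[Proposition~17]{mironchenko2020nonlinear}), so your self-contained proof is genuinely additional content rather than a variant of an in-paper argument. Both directions check out. For (ii)~$\Rightarrow$~(i) you correctly use that $\sigma_k\in\textup{int}(\ba_+)$ is equivalent to $\inf_i \sigma_{k,i}>0$ in $\ell^\infty$, and the identity $\Vert\Gamma_k^n(\mathbf{1})\Vert=\sup_{j_1,\dots,j_{n+1}}\gamma_{j_1j_2}\cdots\gamma_{j_nj_{n+1}}$ follows by unfolding \eqref{Gam}. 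For (i)~$\Rightarrow$~(ii), the construction $\sigma_k=\sup_{n\ge 0}\lambda_k^{-n}\Gamma_k^n(\mathbf{1})$ is the standard ``path/resolvent'' construction for max-linear operators, and your exchange of suprema is legitimate precisely because $\Gamma_k$ is a componentwise supremum with nonnegative coefficients (it would fail for a sum-type gain operator). Two small points you could tighten: (a) the existence of the limit in \eqref{SGC1} follows from submultiplicativity of $a_n=\sup\gamma_{j_1j_2}\cdots\gamma_{j_nj_{n+1}}$ via Fekete's lemma, which also justifies upgrading $a_n^{1/n}\to r(\Gamma_k)<\lambda_k$ to the uniform bound $a_n\le C\lambda_k^n$ for all $n$ once $a_n<\infty$ for each fixed $n$; and (b) that finiteness requires $\sup_{i,j}\gamma_{ij}<\infty$, which via \eqref{gammadcur} needs not only Assumption~\ref{sgassump} as stated but implicitly $\overline\kappa<1$ so that $(1-\kappa_i)^{-1}$ is uniformly bounded --- the paper tacitly assumes this when it asserts that $\Gamma_k$ is well-defined. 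Neither point affects the validity of your proof.
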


The following theorem states the main result of this section. 
\begin{theorem} \label{Main}
	Consider a network $\Sigma=\mathcal{I}(\Sigma_i)_{i\in\N}$. 
	Suppose that Assumption \ref{sgassump} holds. Assume that for each SCC in $\Sigma$, condition~\eqref{SGC1} holds. Then, 
	for any desired precision $\varpi \in \mathbb{R}_{>0}$ as in Definition \ref{SF}, and for each $i \in \N$,
	there exist quantization parameters  $\eta^x_i$, $\eta^u_i$, $\phi_i$, as designed in Algorithm \ref{quantialgo}, such that~\eqref{compoquaninit} and~\eqref{secquantinit} are satisfied simultaneously.
	\begin{proof}
		Note that by Proposition \ref{sgprop}, the spectral radius condition \eqref{SGC1} implies that for each $\bar G_k$, there exists a vector $\sigma_k = (\sigma_{k_i})_{i\in \N}$ satisfying \eqref{SGC2}. Hence, we get 
		\begin{align} \label{imeq}
			\Gamma_k(\sigma_k) = \big(\sup_{j\in\N}\{\gamma_{ij}\sigma_{k_{j}}\}\big)_{i\in\N} \leq \lambda_k\sigma_k 
			\Longrightarrow \sup_{j\in\N}\{\gamma_{ij}\sigma_{k_{j}}\} \leq \lambda_k \sigma_{k_{i}} < \sigma_{k_{i}}. 
		\end{align}
		Since \eqref{imeq} holds for all $i \in \N$, one has
		\begin{align} \notag
			&\sup\limits_{j\in\N}\{\gamma_{ij}\sigma_{k_{j}}\}<\sigma_{k_{i}}
			\stackrel{\eqref{gammadcur}}\Longrightarrow \sup\limits_{j\in\N}\{(1-\kappa_{i})^{-1} \rho_{w_i}\underline \alpha_j^{-1}\sigma_{k_{j}}\} < \sigma_{k_{i}} \\  \label{inequalityinscc}
			&\Longrightarrow  \rho_{w_i} \max\limits_{j \in \n_{k_i}}\{\underline \alpha_j^{-1} \sigma_{k_{j}}\} <  (1-\kappa_{i})\sigma_{k_{i}}.
		\end{align}
		Now, set $\varpi_{k_i}=\sigma_{k_i}r$, for all $i \in \N$, where $r\in \R_{> 0}$ is chosen under the criteria given in  lines 5 and 7 of Algorithm~\ref{quantialgo}. Choose the internal input quantization parameters $\phi_{ij}$ such that for all $i\in \N$ \vspace{-0.2cm} 
		\begin{align} \label{wquanti}
			\max\limits_{j \in \n_{k_i}}\{\phi_{ij}\} < \rho_{w_i}^{-1} (1-\kappa_{i})\varpi_{k_i}-\max\limits_{j \in \n_{k_i}}\{\underline \alpha_j^{-1}\varpi_{k_j}\}.
		\end{align}
		By setting $\vartheta_i \!=\! \max\limits_{j \in \n_{k_i}}\{\underline \alpha_j^{-1}\varpi_{k_j} \!+\!\phi_{ij}\}$ and combining with  \eqref{wquanti}, one has, for all $i \!\in\! \N$,
		\begin{align} \notag
			&\rho_{w_i}\vartheta_i = \rho_{w_i}\max\limits_{j \in \n_{k_i}}\{\underline \alpha_j^{-1}\varpi_{k_j} +\phi_{ij}\}\\ \notag
			&\leq \rho_{w_i} (\max\limits_{j \in \n_{k_i}}\{\underline \alpha_j^{-1}\varpi_{k_j} \}+\max\limits_{j \in \n_{k_i}}\{\phi_{ij}\}) \stackrel{\eqref{wquanti}}< (1-\kappa_{i})\varpi_{k_i},
		\end{align}
		which implies that one can always find suitable local quantization parameters  $\eta^x_i$ and  $\eta^u_i$  to satisfy \eqref{secquantinit}.
		Additionally,  
		the selection of $\vartheta_i = \max\limits_{j \in \n_{k_i}}\{\underline \alpha_j^{-1}\varpi_{k_j} +\phi_{ij}\}$ as in line 9 of Algorithm \ref{quantialgo}, together with the design procedure for $\varpi_i$ and $\phi_{ij}$ ensure that  \eqref{compoquaninit}  is satisfied as well, which concludes the proof.
	\end{proof}
\end{theorem}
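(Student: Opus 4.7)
The plan is to exploit the equivalence in Proposition~\ref{sgprop}: the spectral radius condition~\eqref{SGC1} on each SCC $\bar G_k$ produces a path $\sigma_k\in\textup{int}(\ba_+)$ and a contraction constant $\lambda_k\in(0,1)$ with $\Gamma_k(\sigma_k)\leq\lambda_k\sigma_k$. This strict sub-eigenvector is exactly the slack needed to reconcile the two competing inequalities~\eqref{compoquaninit} and~\eqref{secquantinit}. Unpacking $\Gamma_k(\sigma_k)\leq\lambda_k\sigma_k$ through the definition~\eqref{gammadcur} yields, for every $i$ inside $\bar G_k$, the bound
\[
\rho_{w_i}\max_{j\in\n_{k_i}}\{\underline{\alpha}_j^{-1}\sigma_{k_j}\}\leq\lambda_k(1-\kappa_{i})\sigma_{k_i}<(1-\kappa_{i})\sigma_{k_i},
\]
which is the key strict margin.

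Next, I would set $\varpi_{k_i}=r\,\sigma_{k_i}$ for a positive scalar $r$ chosen (per SCC) as in lines~5 and~7 of Algorithm~\ref{quantialgo}, so that the aggregated precision via~\eqref{defVinit} meets the user-specified $\varpi$. Because the gains $\rho_{w_i}$, $\underline{\alpha}_j^{-1}$, and $\kappa_i$ are linear under Assumption~\ref{sgassump}, uniform scaling preserves the strict inequality above with $\sigma$'s replaced by $\varpi$'s. This makes the interval
\[
\bigl(0,\;\rho_{w_i}^{-1}(1-\kappa_{i})\varpi_{k_i}-\max_{j\in\n_{k_i}}\{\underline{\alpha}_j^{-1}\varpi_{k_j}\}\bigr)
\]
nonempty, so we may pick each $\phi_{ij}$ inside it and define $\vartheta_i=\max_{j\in\n_{k_i}}\{\underline{\alpha}_j^{-1}\varpi_{k_j}+\phi_{ij}\}$. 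By construction $\vartheta_i$ dominates $\underline{\alpha}_j^{-1}(\varpi_j)+\phi_{ij}$ for every $j\in\n_{k_i}$, giving~\eqref{compoquaninit}. A routine $\max(a+b)\leq \max a+\max b$ estimate then delivers $\rho_{w_i}(\vartheta_i)<(1-\kappa_{i})\varpi_i$, leaving a strictly positive residue on the right-hand side of~\eqref{secquantinit}. Consequently, I can choose $\eta^u_i>0$ small enough that $\rho_{u_i}(\eta^u_i)$ does not consume that residue, and then any $\eta^x_i\leq\hat{\gamma}_i^{-1}[(\I-\kappa_i)(\varpi_i)-\rho_{w_i}(\vartheta_i)-\rho_{u_i}(\eta^u_i)]$.

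The main obstacle I anticipate is handling the cross-SCC coupling: the argument above is self-contained \emph{inside} a single SCC $\bar G_k$, but the overall network comprises $\bar N$ SCCs arranged as an acyclic DAG, and subsystems in a non-bottom SCC receive internal inputs from neighbors in other SCCs that lie strictly below them in the DAG. The resolution is to run the construction in reverse topological order on the SCC-DAG, starting from $\tup{BSCC}(G)$ where no external SCC inputs appear, and propagating upward; at each layer the $\varpi_{k_j}$ values already fixed at lower layers enter the budget for $\vartheta_i$ as known constants, which is exactly what Algorithm~\ref{quantialgo} orchestrates. The remaining delicate point is uniformity over the countable index set $\N$: one must verify that $\sigma_k\in\textup{int}(\ba_+)$ (so $r\sigma_{k_i}$ is uniformly bounded away from zero) and that the scalar $r$ can be chosen independently of $i$; both are guaranteed precisely by Assumption~\ref{sgassump}, which supplies the uniform bounds $\overline{\kappa},\overline{\rho}_w,\underline{\alpha}$ that keep the operator $\Gamma_k$ well-defined and the sub-eigenvector bounded below.
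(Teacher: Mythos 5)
Your proposal is correct and follows essentially the same route as the paper's proof: Proposition~\ref{sgprop} supplies the strict sub-eigenvector margin $\Gamma_k(\sigma_k)\leq\lambda_k\sigma_k$, which after scaling by $r$ leaves a nonempty range for the $\phi_{ij}$ and makes $\vartheta_i=\max_{j\in\n_{k_i}}\{\underline{\alpha}_j^{-1}\varpi_{k_j}+\phi_{ij}\}$ satisfy both~\eqref{compoquaninit} and~\eqref{secquantinit}. Your explicit discussion of the cross-SCC ordering and of the uniform positivity of $\sigma_k$ is, if anything, more careful than the paper's own proof, which delegates those points to Algorithm~\ref{quantialgo}.
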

\IncMargin{0.02em}
\begin{algorithm2e*}[t!]
	\DontPrintSemicolon
	\SetAlgoNoLine
	\KwInput{The desired precision $\varpi \in \mathbb{R}_{>0}$; the directed graph $G$ composed of SCCs $\bar G_k$, $\forall k \in [1;\bar N]$, and vectors $\sigma_k = (\sigma_{k_i})_{i\in \N}$ satisfying \eqref{SGC2} for $\bar G_k$; the functions $\mathcal{V}_i$ equipped with $\kappa_i, \rho_{w_i}, \underline \alpha_i \in \R_{\geq 0}$, $\forall i \in\N$.}	
	\KwOutput{$\eta^x_i, \eta^u_i, \phi_i  \in \R_{\geq 0}$, $\forall i \in\N$.}
	Set $\varpi_i := \infty$, $\vartheta_i := \infty$, $\forall i \in\N$, $\forall k \in [1;\bar N]$, $G^* =  G$; 
	
	\While{$G^* \neq \varnothing$}
	{
		\ForEach{$\bar G_k \in \tup{BSCC}(G^*)$}
		{	
			\eIf{$G^* =  G$}
			{ 
				\Indm 				
				choose $r \in \mathbb{R}_{>0}$ s.t. $\sup\limits_{i \in\N}\{\sigma_{k_i}r\}= \varpi$;
				
			}
			{
				\Indm 
				choose $r \in \mathbb{R}_{>0}$ s.t. $\sigma_{k_i}r \leq \underline\alpha_{i}\min\limits_{j \in \m_{i}\backslash\m_{k_i}}\{\vartheta_{j}-\phi_{ji}\}$, $\forall i \in\N$; 
			}				
			set $\varpi_{k_i}\!=\!\sigma_{k_i}r$, choose $\phi_{ij}$, $\forall i,j \in\N$, s.t. $\max\limits_{j \in \n_{k_i}}\{\phi_{ij}\} \!<\! \rho_{w_i}^{-1} \kappa_{i}\varpi_{k_i}\!-\!$ $\max\limits_{j \in \n_{k_i}}\{\underline \alpha_j^{-1}\varpi_{k_j}\}$; set $\vartheta_{k_i} =$ $\max\limits_{j \in \n_{k_i}}\{\underline \alpha_j^{-1}\varpi_{k_j}+\phi_{ij}\}$, $\forall i \in\N$;				
			choose $\phi_{ij} <\vartheta_i$, $\forall i \in\N$, $\forall j\in \n_{i}\backslash\n_{k_i}$;	
		}
		
		$G^* = G^* \backslash \tup{BSCC}(G^*)$;
	}
	Compute $\eta^x_i$ and  $\eta^u_i$ s.t.	$\eta^x_i \leq   \hat{\gamma}_i^{-1}[(1-\kappa_i) \varpi_i - \rho_{w_i}\vartheta_i-\rho_{u_i}(\eta^u_i)]$, $\forall i \in\N$.
	\caption{Compositional design of local quantization parameters $\eta^x_i, \eta^u_i, \phi_i \in \mathbb{R}_{\geq 0}$, $\forall i \in\N$} \label{quantialgo}
\end{algorithm2e*}
\DecMargin{0.02em}
\begin{remark}\label{nosgc}
	Note that if $\gamma_{ij}< 1$ for any $i,j\in\N$, the  spectral radius condition $r(\Gamma_k) <1$ as in \eqref{SGC1} is satisfied automatically.  In this case, by Proposition \ref{sgprop}, there always exists  $\lambda_k \in (0,1)$ such that inequality \eqref{SGC2} holds with $\sigma_{k} = (1)_{i \in \N}$ and $\sup_{i\in\N}\{\gamma_{ij}\} \leq \lambda_k$.
	Note that by involving the notion of SCCs in the design procedure for the selection of parameters, we are allowed to check the small-gain condition and design local quantization parameters inside each SCC, independently of the entire network. 
	In addition, since the original infinite network is composed of a finite number of SCCs, the algorithm terminates in \emph{finite} iterations. \hfill$\diamond$ 
\end{remark}

\subsection{Safety controllers}

In this subsection, we consider a safety synthesis problem for {an} infinite network. Note that classical safety synthesis methods are not applicable any more in this context since they require infinite memory.
Here, we show a compositional approach which addresses {such a} synthesis problem in a decentralized manner.  

%
Consider an infinite network $\Sigma = (X,U,f,Y,h)$ as in Definition \ref{interconnectedsystem}, consisting of subsystems $\Sigma_i = (X_i,U_i,W_i,f_i,$ $Y_i,h_i)$, $i\in\N$, as in \eqref{eq:2}. 
Suppose we are given a global decomposable safety specification  $S=\prod_{i\in\N}S_{i}$.
We define $Out = {\ba} \setminus S$ and its projection on the $i$-th subsystem as $Out_i = {\mathbb{R}^{n_i}} \setminus S_i$.
From Definition~\ref{interconnectedsystem}, the state transition function of the infinite network holds the following relations:\\
For all $x = (x_i)_{i\in\N} \in S$, all $u = (u_i)_{i\in\N} \in U$, all $x' = (x_i')_{i\in\N} \in S$, 
\begin{align} \label{evo1}
	x' \in f(x,u) \Longleftrightarrow x_{i}' \in f_i(x_i,u_i,w_i), w_{ij} = h_{ji}(x_j), \forall i \in \N, \forall j\in\n_i.    
\end{align}	
For all $x = (x_i)_{i\in\N} \in S$, all $u = (u_i)_{i\in\N} \in U$, 
\begin{align} \label{evo2}
	Out \cap \{f(x,u)\} \neq \varnothing \Longleftrightarrow 
	\exists i \in \N: Out_i \cap \{f_i(x_i,u_i,w_i)\} \neq \varnothing, w_{ij} = h_{ji}(x_j),\forall j \in \n_i.   
\end{align}

Now, we introduce the notion of safety controllers that are used to enforce safety specifications over the subsystems.

\begin{definition}
	\label{safetycontroller}
	A safety controller for a discrete-time control subsystem $\Sigma_i$ and the safe set $S_i \subseteq X_i$ is a map $C_i: X_i \rightrightarrows U_i$ such that:	\begin{enumerate}
		\item[{(i)}] $\tup{dom}(C_i) = \{ x_i\in X_i |C_i(x_i) \neq \varnothing\} \subseteq S_i$;
		\item[{(ii)}] $f_i(x_i,u_i,w_i) \subseteq \tup{dom}(C_i)$ for all $x_i\in \tup{dom}(C_i)$, all $u_i\in C_i(x_i)$, and all $w_i\in W_i$.
	\end{enumerate}	 
\end{definition}
\begin{remark}
	Note that the safety controllers for subsystems are synthesized by following an assume-guarantee reasoning \cite{henzinger1998you}. In particular, for each subsystem $\Sigma_i$, we guarantee that safety controller $C_i$ (if existing) enforces the safety specification  $S_i$ over $\Sigma_i$, by assuming that all of its in-neighbors $\Sigma_j$, $j \in \n_i$, have safety controllers $C_j$ enforcing safety  specifications $S_j$.
	Moreover, since the interconnection variables of the concrete infinite network are constrained as $w_{ij} = y_{ji}$ (cf. Definition \ref{interconnectedsystem}), for all $i\in\N, j\in\n_i$,  the internal input set $W_i$ considered in Definition \ref{safetycontroller} is restricted to $W_i=\prod_{j\in{\n_i}}W_{ij} = \prod_{j\in{\n_i}}Y_{ji}$ where 
	$Y_{ji} = \{h_{ji}(x_j)| x_j \in S_j\}$. 
	\hfill$\diamond$ 
\end{remark}

Similarly, the definition of a safety controller for the \emph{overall} network is given as follows.
\begin{definition}
	\label{safetycontrollerinf}
	A safety controller for an infinite network $\Sigma$ and the safe set $S \subseteq X$ is a map $C: X \rightrightarrows U$ such that:
	\begin{enumerate}
		\item[{(i)}] $\tup{dom}(C) = \{ x\in X |C(x) \neq \varnothing\} \subseteq S$;
		\item[{(ii)}] {$f(x,u) \subseteq \tup{dom}(C)$ for all $x\in \tup{dom}(C)$ and all $u\in C(x)$}.
	\end{enumerate}	 
\end{definition}
%


Suppose {that} we are given local safety controllers $C_i$ as in Definition \ref{safetycontroller} for all $i \in \N$, each corresponding to subsystems $\Sigma_i$ and safety specification $S_i$.  
Let controller $C: X \rightrightarrows U$ be defined by $C(Out) = \varnothing$ and
\begin{align}\label{CompositionalController} 	
	\forall i \in \N {\kern 1pt}{\kern 1pt}{\kern 1pt}{\kern 1pt} {\mathop{\rm with}} {\kern 1pt}{\kern 1pt}{\kern 1pt}{\kern 1pt} x_i \in X_i, C(x) = \{u  \in U |u_i \in C_i(x_i), {i \in \N} \},    
\end{align}
where $x = (x_i)_{i\in\N} \in X$, $u = (u_i)_{i\in\N} \in U$.

Now, we provide the next proposition, adapted from \cite[Theorem 3.1]{liu2019compositional}, which shows that the composed controller as defined above works for the overall infinite network. 
\begin{proposition}
	\label{Theoremcontroler}
	Controller $C: X \rightrightarrows U$ defined in \eqref{CompositionalController} is a safety controller for the infinite network $\Sigma$ and safe set $S$.
\end{proposition}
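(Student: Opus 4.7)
The plan is to verify the two clauses in Definition \ref{safetycontrollerinf} directly, exploiting the product structure of the composed controller \eqref{CompositionalController} together with the assume–guarantee reasoning behind Definition \ref{safetycontroller}.

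First, I would establish clause (i), namely $\tup{dom}(C) \subseteq S$. Given $x = (x_i)_{i\in\N} \in \tup{dom}(C)$, the definition of $C$ in \eqref{CompositionalController} forces $C_i(x_i) \neq \varnothing$ for every $i \in \N$, i.e.\ $x_i \in \tup{dom}(C_i)$. Since each local controller satisfies $\tup{dom}(C_i) \subseteq S_i$, the decomposability of the safe set $S = \prod_{i\in\N} S_i$ immediately yields $x \in S$.

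For clause (ii), I would pick an arbitrary $x \in \tup{dom}(C)$ and $u \in C(x)$, fix any $x^+ \in f(x,u)$, and prove $x^+ \in \tup{dom}(C)$. Using the characterization \eqref{evo1}, the choice of $x^+$ is equivalent to $x_i^+ \in f_i(x_i,u_i,w_i)$ for every $i \in \N$, with the interconnection constraint $w_{ij} = h_{ji}(x_j)$ for $j \in \n_i$. From clause (i) we already know $x_j \in S_j$, so each $w_{ij} = h_{ji}(x_j)$ lies in $Y_{ji}$, and therefore $w_i \in W_i = \prod_{j\in \n_i} Y_{ji}$. Applying condition (ii) of Definition \ref{safetycontroller} to each $C_i$, with the admissible pair $(u_i,w_i)$ where $u_i \in C_i(x_i)$, gives $x_i^+ \in \tup{dom}(C_i)$ for every $i \in \N$. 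Invoking \eqref{CompositionalController} once more, $C(x^+) \neq \varnothing$, which is exactly $x^+ \in \tup{dom}(C)$.

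The argument is essentially componentwise bookkeeping; the only delicate point I would want to flag is the \emph{consistency between the local internal-input alphabets and the global interconnection constraint}. Concretely, one must confirm that the actual $w_i$ generated by the network evolution lies in the restricted $W_i$ for which the local safety guarantee was issued. This is precisely why the remark after Definition \ref{safetycontroller} restricts $W_i$ to $\prod_{j\in \n_i}\{h_{ji}(x_j): x_j \in S_j\}$, and this is the single nontrivial check in the proof. Once this compatibility is recorded, no simulation-function or small-gain machinery is required, and the result reduces to a clean induction-style verification of the two defining clauses.
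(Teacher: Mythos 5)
Your proposal is correct and follows essentially the same route as the paper's proof: a componentwise verification of the two clauses of Definition \ref{safetycontrollerinf} using the local safety-controller properties and the interconnection characterization \eqref{evo1}. The only (cosmetic) differences are that the paper establishes $x^+\in S$ by a separate contradiction argument via \eqref{evo2} before invoking condition (ii) of Definition \ref{safetycontroller}, whereas you obtain it directly from $x_i^+\in \tup{dom}(C_i)\subseteq S_i$, and that you make explicit the check that the realized internal inputs $w_{ij}=h_{ji}(x_j)$ lie in the restricted sets $Y_{ji}$ --- the assume--guarantee point the paper relegates to the remark following Definition \ref{safetycontroller}.
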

\begin{proof}
	We start by showing condition (i) of Definition \ref{safetycontrollerinf}.
	By~\eqref{CompositionalController}, it can be readily seen that {for all} $x \in X$ with $C(x) \neq \varnothing$ {we get} $x \notin Out$.
	From the  definition of $Out = {\ba} \setminus S$, we {have that all $x \in X$ where $C(x) \neq \varnothing$ necessarily lie inside $S$}, which satisfies condition (i) of Definition \ref{safetycontrollerinf}.	
	We proceed to show condition (ii) of Definition \ref{safetycontrollerinf}. 
	Let $x \in \tup{dom}(C) \subseteq S$, $u \in C(x)$ and $x' \in f(x,u)$.
	First, we show $x' \in S$ by contradiction. If $x' \notin S$, then $x' \in Out$.
	From \eqref{evo2}, there exists  $i \in \N$, such that $Out_i \cap \{f_i(x_i,u_i,w_i)\} \neq \varnothing$, which contradicts the fact that $u_i \in C_i(x_i)$ with $C_i$ being the safety controller for subsystem $\Sigma_i$ and the corresponding safe set $S_i$. 
	Therefore, we have $x' \in S$. 
	From \eqref{evo1}, it is clear that, for each $i \in \N$, $x'_i \in f_i(x_i,u_i,w_i)$. 
	Moreover, {by condition (ii) of Definition \ref{safetycontroller}, $u_i \in C_i(x_i)$ implies that $x'_i \in dom(C_i)$.} 
	For all $i \in \N$, let $u'_i \in C_i(x'_i)$ and by \eqref{CompositionalController}, we have $u' = (u_i')_{i\in\N} \in C(x')$ and $x' \in dom(C)$. {It follows that condition (ii) of Definition \ref{safetycontrollerinf} is satisfied as well.} 
	Hence, we conclude that $C$ is a safety controller for $\Sigma$ and safe set $S$.
\end{proof}

Proposition~\ref{Theoremcontroler} shows that one can obtain a global safety controller for an infinite network which enforces an overall safety specification by composing local safety controllers designed for subsystems. 
In that way, 
one can follow this decentralized controller synthesis strategy to
easily design local safety controllers for the local symbolic models, and then refine the controllers back to the concrete subsystems via the corresponding alternating simulation relations across them.

\section{Case Study}\label{casestudy}
In this section, we present our results on 
a road traffic network divided into infinitely many road cells. We first construct a symbolic model of the infinite network in a compositional way. 
Then we use the constructed symbolic model as a substitute to compositionally synthesize a safety controller to keep the density of traffic in each cell remaining within a desired region. 
The effectiveness of our results is also shown in comparison with the existing compositional results in~\cite{SwikirIfac}.
\subsection{Road traffic network}
In this subsection, let us first introduce the model of this case study which is a variant of the road traffic model in \cite{canudas2012graph}.  
Here, the traffic flow model is considered as a network divided into infinitely many cells.  Each cell, indexed by $i \in \N$, can be modeled as a one-dimensional subsystem, represented as a tuple $\Sigma{_i} = (X{_i},U{_i},W{_i},f{_i},X{_i},\id)$. Moreover, each cell is assumed to be equipped with at least one measurable entry and one exit. The traffic flow dynamics of each cell is given by
\begin{align}\label{tsubsys}
	\Sigma_i:\left\{
	\begin{array}{rl}
		\mathbf{x}_i(k+1)= & (1-\frac{\tau v}{l}-e)\mathbf{x}_i(k) + d_i\omega_i(k)+b\nu_i(k),\\
		\mathbf{y}{_i}(k)=&\mathbf{x}{_i}(k),
	\end{array}
	\right.
\end{align}
where $\tau$ is the sampling time in hour, $l$ is the length of each cell in kilometers, and
$v$ is the traffic flow speed in kilometers per hour. For each cell $i \in \N$ in the network, the state $\mathbf{x}_i(k)$ of each subsystem $\Sigma_i$ represents the density of the traffic in vehicle per cell at a specific time instant indexed by $k$. The scalar $b$ denotes the number of vehicles that are allowed to enter each cell during each sampling time controlled by the input signals $\nu_i(\cdot) \in \{0,1\}$, where $\nu_i(\cdot) = 1$ (resp. $\nu_i(\cdot) = 0$) corresponds to green (resp. red) traffic light. The constant $e$ denotes the percentage of vehicles that leave the cell during each sampling time through exits.  
\begin{figure}
		\includegraphics[width=0.8\textwidth]{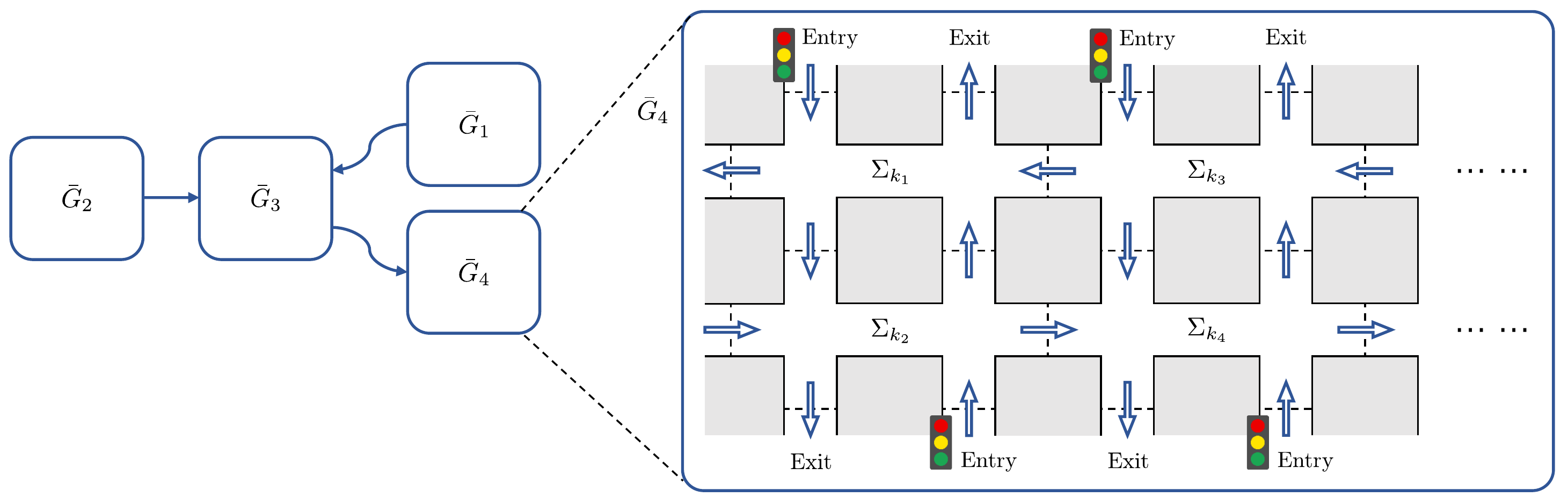}
		\vspace{-0.3em}
		\caption{Model of a road traffic network composed of four subnetworks, each of which consists of infinitely many subsystems.}
		\label{infnet}
	\vspace{-0.3cm}
\end{figure}
The left side of Figure~\ref{infnet} shows the structure of the traffic network as a directed graph  consisting of $\bar N = 4$ strongly connected subnetworks, each of which is denoted by $\bar{G}_k$, $k\in\{1,2,3,4\}$. Subnetworks are connected through single-directional freeways.
The right side of Figure~\ref{infnet} roughly depicts the traffic network topology of subnetwork $\bar G_1$ consisting of infinitely many cells (modeled by $\Sigma_i$) with different link models. 
The internal inputs of the subsystems satisfy the following interconnection structure:
\begin{enumerate}
	\item[(i)] For subsystems $\Sigma_{i}$ in subnetworks $\bar{G}_1$ and  $\bar{G}_2$
	\begin{enumerate}
		\item[$\bullet$] $d_i=(\frac{1-e}{2})(\frac{\tau v}{l},\frac{\tau v}{l})\trn$,  $\omega_i=(\mathbf{y}_{i+1},\mathbf{y}_{i+2})$ if $i\in \{2c+1: c \in \N_0\}$;
		\item[$\bullet$] $d_i=(1-e)\frac{\tau v}{l},\omega_i=\mathbf{y}_{i-1}$ if $i\in \{2\}$;
		\item[$\bullet$] $d_i=(\frac{1-e}{2})(\frac{\tau v}{l},\frac{\tau v}{l})\trn$,  $\omega_i=(\mathbf{y}_{i-2},\mathbf{y}_{i-1})$ if $i\in \{2c+2: c \in \N\}$.
	\end{enumerate}	
	\item[(ii)] For subsystems $\Sigma_{i}$ in subnetworks $\bar{G}_3$ and  $\bar{G}_4$
	\begin{enumerate}
		\item[$\bullet$] $d_i=(\frac{1-e}{2})(\frac{\tau v}{l},\frac{\tau v}{l})\trn$, $\omega_i=(\mathbf{y}_{i+1},\mathbf{y}_{i+2})$ if $i\in \{2c+1: c \in \N_0\}$;
		\item[$\bullet$] $d_i=(\frac{1-e}{2})(\frac{\tau v}{l},\frac{\tau v}{l})\trn$, $\omega_i=(\mathbf{y}_{i-2},\mathbf{y}_{i-1})$ if $i\in \{2c+2: c \in \N_0\}$,
	\end{enumerate}	
\end{enumerate}
where $\mathbf{y}_0 = \mathbf{y}_n$, $n \in I_{k-1}, k \in \{3,4\}$.
By Definition \ref{interconnectedsystem}, the infinite network $\Sigma=\mathcal{I}(\Sigma_i)_{i\in\N}$ is denoted by a tuple $\Sigma = (X,U,f,X,\id)$, where 
$X = \{ x = (x_i)_{i\in\N} : x_i \in X_i\}$, $U = \{ u = (u_i)_{i\in\N} : u_i \in U_i\}$, $f(x,u) =\{(x^+_i)_{i\in\N}|x_i^+\in f_i(x_i,u_i,w_i)\}$, and $Y =\prod_{i\in\N}X_{i}$.
First we show the well-posedness of the overall network by establishing that 
$\Vert f(x,u) \Vert < \infty$. Note that we have 
\begin{align*}
	\Vert f(x,u)\Vert&=\sup\limits_{i\in \N}\{|f_i(x_i,u_i,w_i)|\}\stackrel{\eqref{tsubsys}}=\sup\limits_{i\in \N}\{|(1-\frac{\tau v}{l}-e)x_i + d_iw_i + bu_i|\}\\
	&\leq |(1-\frac{\tau v}{l}-e)|\sup\limits_{i\in \N}\{|x_i| \}+|(1-e)\frac{\tau v}{l}|\sup\limits_{i\in \N}\{|x_i |\}+|b|\sup\limits_{i\in \N}\{|u_i |\}\\
	&\leq \max\{|(1\!-\!\frac{\tau v}{l}\!-\!e)|,|(1\!-\!e)\frac{\tau v}{l}|,|b|\}(\sup\limits_{i\in \N}\{|x_i| \}\!+\!\sup\limits_{i\in \N}\{|x_i|\}\!+\!\sup\limits_{i\in \N}\{|u_i |\})\\
	&\stackrel{\eqref{infnorm}}=\max\{|(1\!-\!\frac{\tau v}{l}\!-\!e)|,|(1\!-\!e)\frac{\tau v}{l}|,|b|\}(2\Vert x\Vert+\Vert u\Vert\})<\infty.
\end{align*}
Therefore, the infinite network $\Sigma=\mathcal{I}(\Sigma_i)_{i\in\N}$ is well-posed. Moreover, each subsystem admits a $\delta$-ISS Lyapunov function of the form $\V_i(x_i,\hat{x}_i) = |x_i-\hat{x}_i| $ satisfying conditions \eqref{c1}--\eqref{tinq} for all $i \in \N$ with ${\underline{\psi}_i}= {\overline{\psi}_i} = \id$, $\kappa_i = (1-\frac{\tau v}{l}-e)\id$, $\rho_{w_i} = |(1-e)\frac{\tau v}{l}|\id$, and $\rho_{u_i}= \hat \gamma_i = \id$.

\subsection{Hierarchical compositional construction of symbolic model}
Now set the parameter values of the system as $\tau=\frac{10}{60\times 60}$h, $l=0.5$km, $v=60$km/h, $b=5$, and $e = 0.1$. 
We construct a symbolic model that simulates the infinite network through an $\varpi$-ASF as in Definition $\ref{SF}$.  
For a given desired parameter $\varpi$, the output behavior of the constructed symbolic network will mimic that of the original network with a mismatch $\hat \varepsilon = \alpha^{-1}(\varpi)$ (cf. Remark \ref{remarkaccuracy}). 
By fixing $\varpi = 0.8$, we apply our compositionality results to design proper quantization parameters for all the subsystems, so that the overall symbolic network simulates the original infinite network with precision $\hat \varepsilon$.
First note that for each strongly connected subnetwork $\bar{G}_k$, 
by \eqref{gammadcur}, it can be verified that $\gamma_{ij} <1$, for all $i \in \N$, $j \in \n_{k_i}$, and the uniformity conditions in Assumption \ref{sgassump} hold readily. Thus, the spectral radius condition \eqref{SGC1} is satisfied, and condition \eqref{SGC2} holds  as well with a candidate vector $\sigma_{k} = (\sigma_{k})_{i \in \N} = (1)_{i \in \N}$ (cf. Remark \ref{nosgc}). 
Next, given the desired parameter $\varpi$,
we apply Algorithm \ref{quantialgo} to design local quantization parameters compositionally. 
We start with $G^* = G$ and get the bottom strongly connected subnetwork $\tup{BSCC}(G^*) = {\bar G}_4$ for line 3 in Algorithm \ref{quantialgo}. 
Consider the subnetwork ${\bar G}_4$, we choose $r = \varpi = 0.8$, $\phi_{ij} = 0$, and accordingly $\varpi_{k_i} = \vartheta_{k_i} = r$ so that the conditions in lines 5 and 9 are satisfied.  Now $G^*$ is updated in line 11 to $\{{\bar G}_1,{\bar G}_2,{\bar G}_3\}$ and the BSCC of the updated $G^*$ is ${\bar G}_3$. 
We proceed by choosing $r = \min_{j \in I_4}\vartheta_{j} = 0.8$ to satisfy the conditions in lines 7 and 9 with $\varpi_{k_i} = \vartheta_{k_i} = 0.8$ and $\phi_{ij} = 0$. 
Now $G^*$ and its BSCCs are updated to $G^* = \tup{BSCC}(G^*) = \{{\bar G}_1, {\bar G}_2\}$. Similarly, one can choose $\varpi_{k_i} = \vartheta_{k_i} = 0.8$ and $\phi_{ij} = 0$ for all of the subsystems in subnetworks ${\bar G}_1$ and ${\bar G}_2$ such that conditions in lines 7 and 9 are satisfied. 
Till here, we obtain local parameters $(\varpi_i, \vartheta_i) = (0.8,0.8)$ for all $i \in \N$. 
Now we proceed to design local quantization parameters $\eta^x_i$ and $\eta^u_i$ such that the inequality in line 13 holds with the parameters $(\varpi_i, \vartheta_i)$ we just obtained. 
Here, we take the local quantization parameters as $\eta^x_i = 0.1$ and $\eta^u_i = 0$, for all $i \in \N$, which will be later used to build local symbolic models of all the subsystems. Using the result in Theorem \ref{thm:2}, one can readily verify that the $\delta$-ISS Lyapunov function $\V_i(x_i,\hat{x}_i) = |x_i-\hat{x}_i|$ is a local $\varpi_{i}$-ASF from each local symbolic model $\hat \Sigma_i$ to the original subsystem $\Sigma_i$. Furthermore, by Theorem \ref{thm:3}, $\tilde{\V}(x,\hat{x}) = \sup_{i \in \N}\{|x_i-\hat{x}_i|\}$ is well-defined and is an $\varpi$-ASF from the abstract network $\hat \Sigma=\mathcal{I}(\hat\Sigma_i)_{i\in\N}$ to the original infinite network $\Sigma=\mathcal{I}(\Sigma_i)_{i\in\N}$.
We have the guarantee that the mismatch between the output behaviors of the infinite network $\Sigma$ and that of its symbolic model $\hat \Sigma$ will not exceed $\hat \varepsilon = \alpha^{-1}(\varpi) = 0.8$ (cf. Remark \ref{remarkaccuracy}).

Here, let us compare our compositional technique with the one proposed in \cite{SwikirIfac}.  
Note that the same traffic network model was also adopted in \cite{SwikirIfac} to illustrate the compositional abstraction technique proposed there. 
Using same state and input quantization parameters $\eta^x_i = 0.1, \eta^u_i = 0$ as in the present paper, 
the overall approximation error between related networks obtained in \cite{SwikirIfac} is $\hat \varepsilon = 1.7$, which is much larger than the one we obtained here ($\hat \varepsilon = 0.8$ as computed in the last paragraph). The reason is due to the conservatism nature employed there \cite[Theorem 4.4]{SwikirIfac} to transfer the additive form of simulation function to a max form (similar arguments can be found also in \cite[Remark 4.5]{SWIKIR2019}).   
Thus, our proposed results here outperform the ones in \cite{SwikirIfac} while providing more accurate overall abstractions. 

\begin{figure*}[t!]
	\centering
	\begin{subfigure}[b]{0.475\textwidth}
		\centering
		\includegraphics[width=\textwidth]{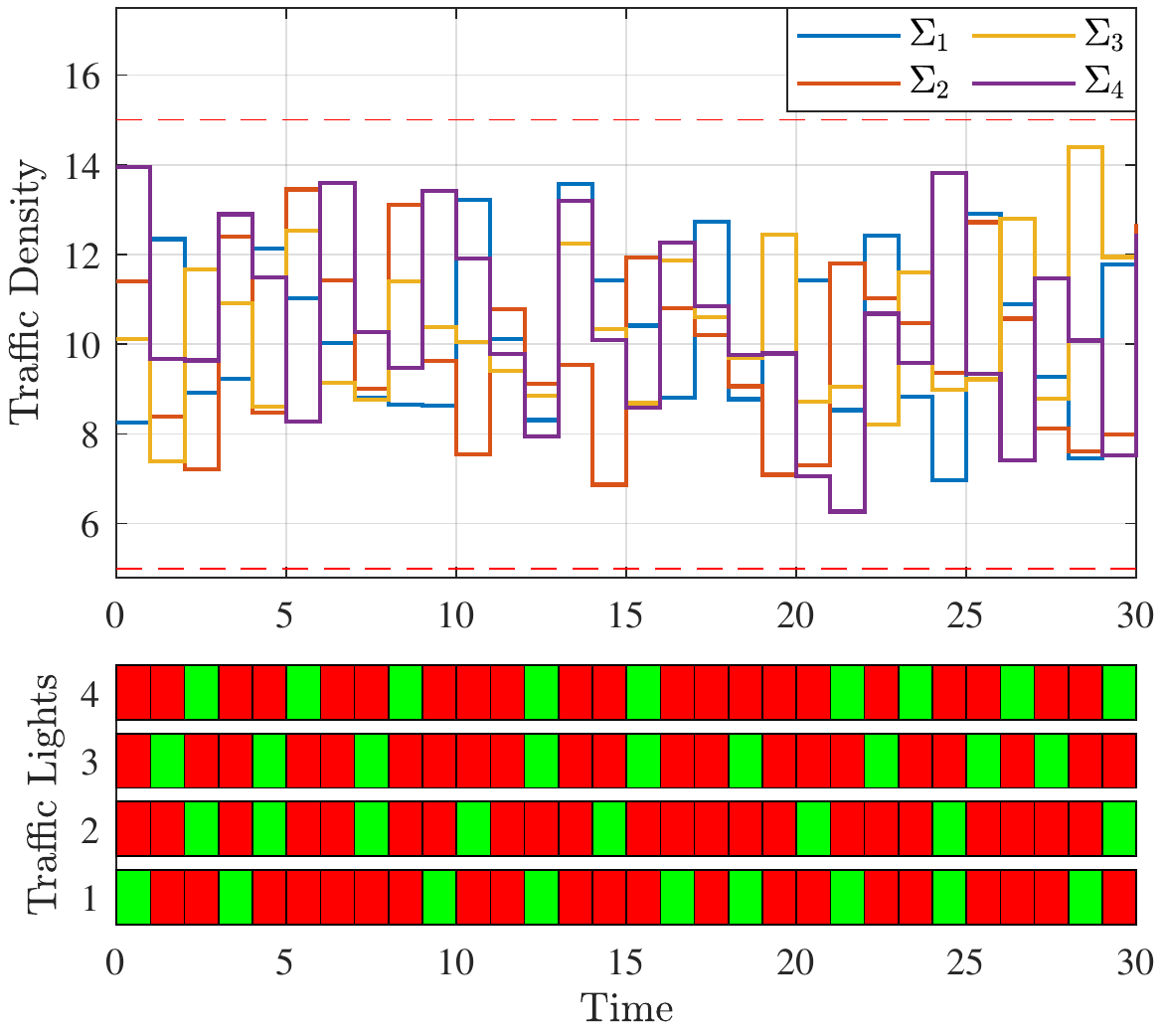}
		\caption{{\small Trajectories in subnetwork $\bar{G}_1$}}    
		\label{fig:trajsa}
	\end{subfigure}
	\hfill
	\begin{subfigure}[b]{0.475\textwidth}  
		\centering 
		\includegraphics[width=\textwidth]{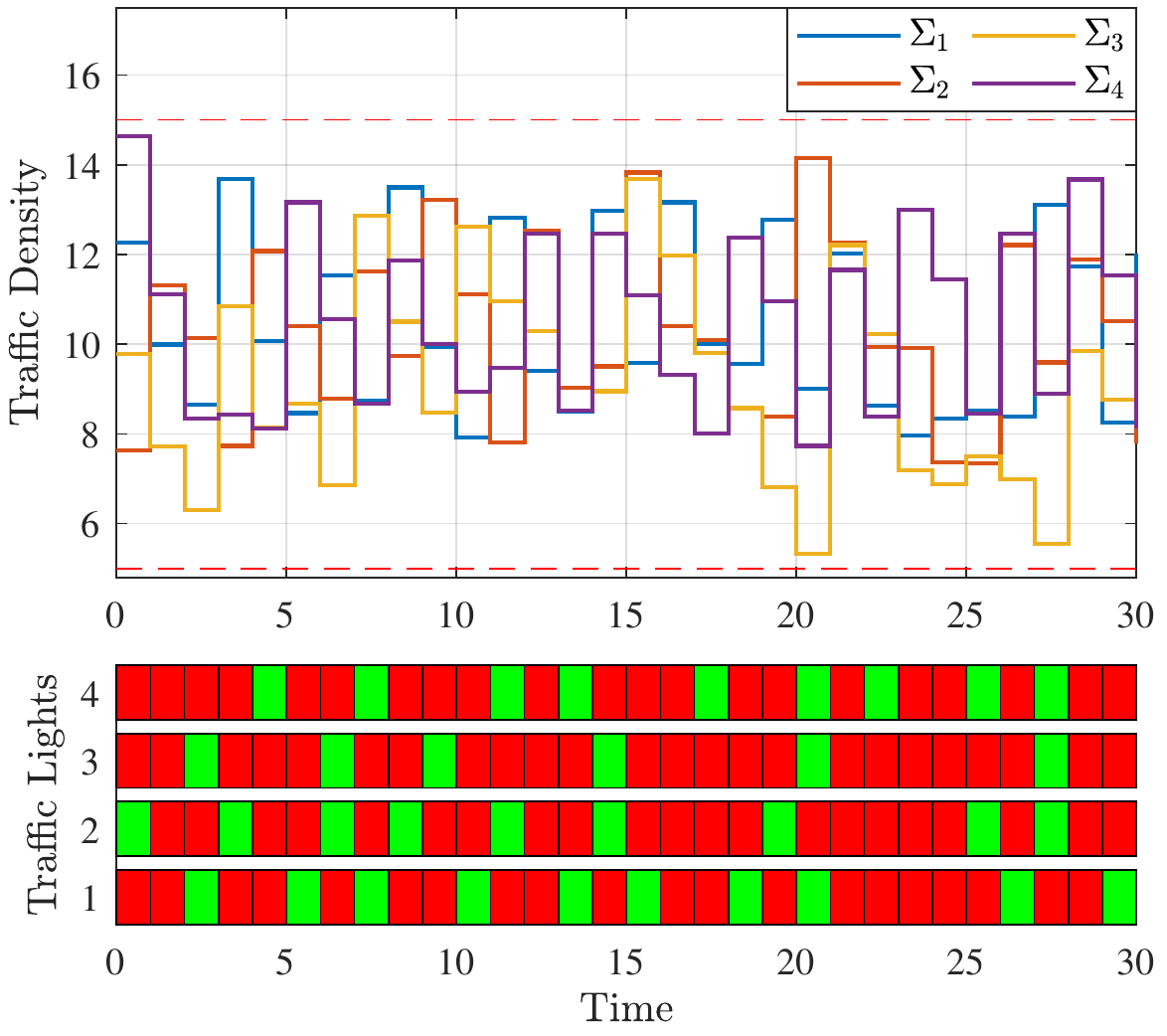}
		\caption{{\small Trajectories in subnetwork $\bar{G}_2$}}    
		\label{fig:trajsb}
	\end{subfigure}
	\vskip\baselineskip
	\begin{subfigure}[b]{0.475\textwidth}   
		\centering 
		\includegraphics[width=\textwidth]{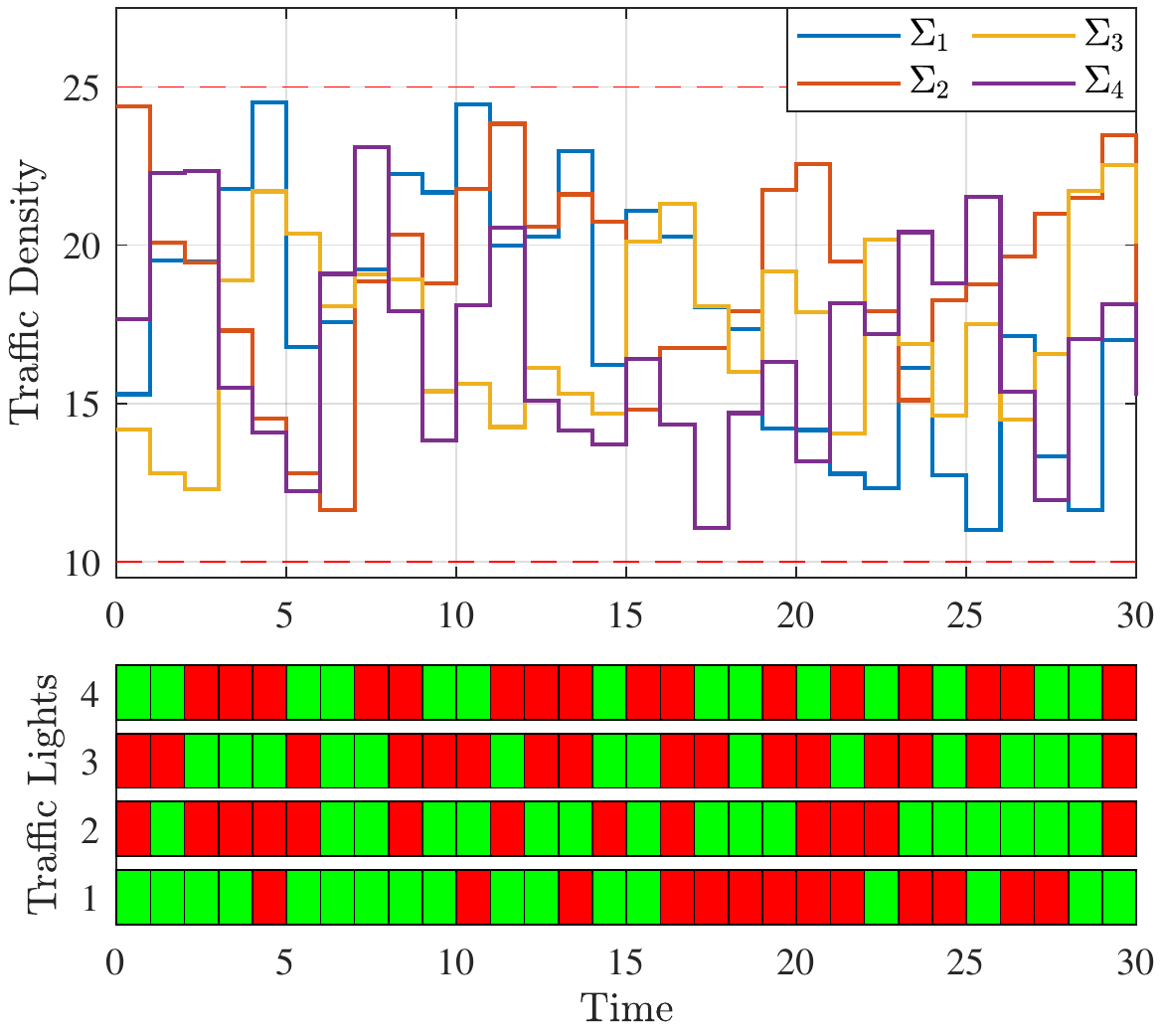}
		\caption{{\small Trajectories in subnetwork $\bar{G}_3$}}    
		\label{fig:trajsc}
	\end{subfigure}
	\hfill
	\begin{subfigure}[b]{0.475\textwidth}   
		\centering 
		\includegraphics[width=\textwidth]{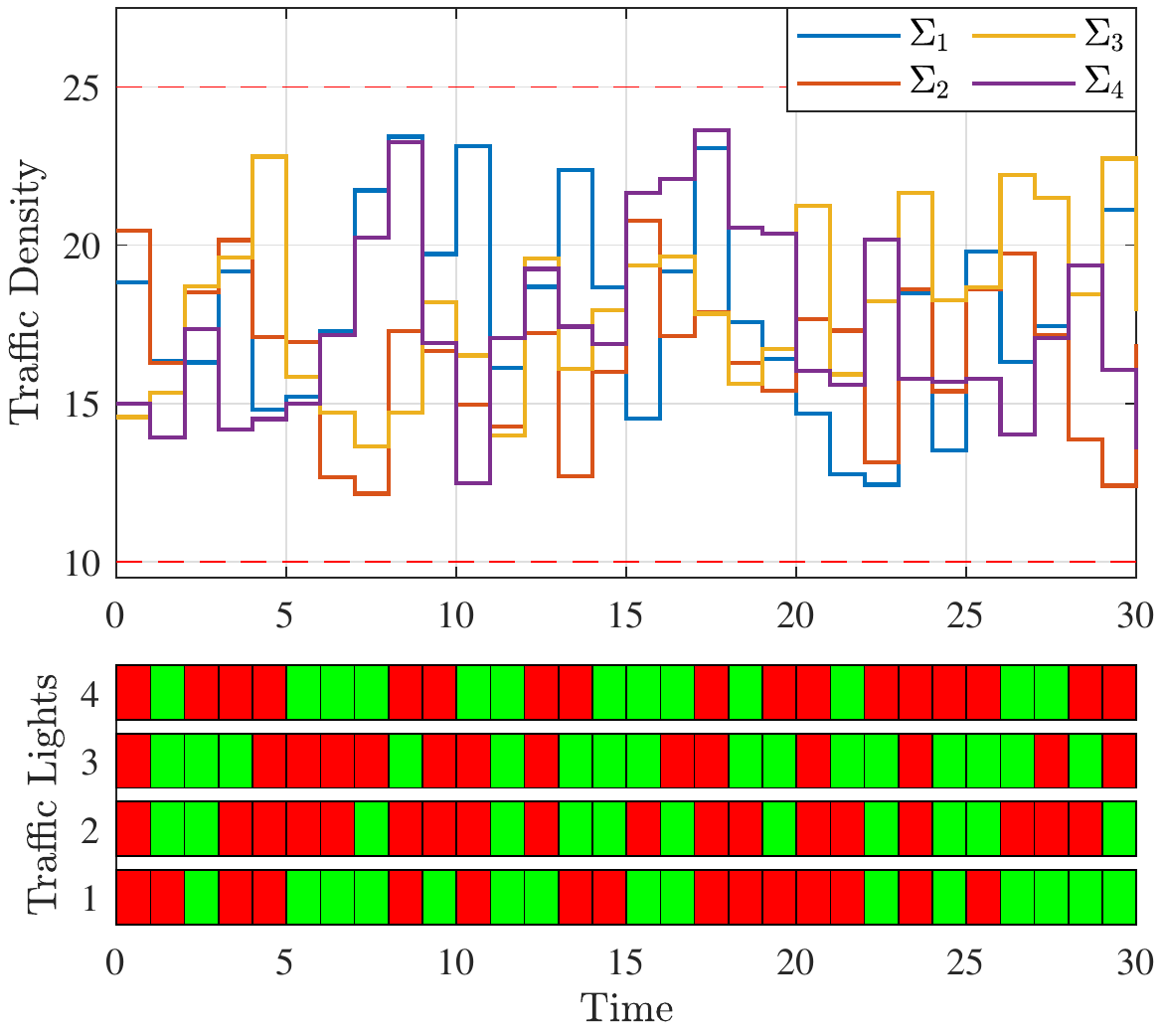}
		\caption{{\small Trajectories in subnetwork $\bar{G}_4$}}    
		\label{fig:trajsd}
	\end{subfigure}
	\caption{Simulation results: Trajectories of traffic density (upper subplots) and traffic lights (lower subplots) in sample cells from different subnetworks. The traffic density in each cell (subsystem $\Sigma_{i}$) is required to remain in desired safe region $S_i$ (indicated by the red dashed lines). The sets $S_i$ are given by $S_i = [5,15]$ in subnetworks $\bar{G}_1$ and $\bar{G}_2$, $S_i = [10,25]$ in subnetworks $\bar{G}_3$ and $\bar{G}_4$.} 
	\label{fig:trajs}
\end{figure*} 
\subsection{Compositional safety controller synthesis}
Now we synthesize a \emph{safety} controller for the infinite network via the constructed symbolic model such that the density of traffic in each cell is maintained in a desired safe region. Specifically, we aim at finding a control policy such that in subnetworks $\bar{G}_k$, $k \in \{1,2\}$, each subsystem $\Sigma_{i}$ satisfies safety specification $S_{i} = [5, 15]$ (vehicles per cell), and in subnetworks $\bar{G}_k$, $k \in \{3,4\}$, each subsystem $\Sigma_{i}$ satisfies safety specification $S_{i} = [10, 25]$ (vehicles per cell).    
Note that for the overall network, the overall safety specification  $S=\prod_{i\in\N}S_{i}$ is globally decomposable. By Proposition \ref{Theoremcontroler}, one can design local safety controllers for the subsystems separately with respect to local safety specifications, with the guarantee that the composed controller works as the overall safety controller for the overall infinite network.    
For each subsystem $\Sigma_i$, the idea is to design a local safety controller for its symbolic model $\hat{\Sigma}_i$, and then refine the controller back to the original subsystem by choosing $u_i = \hat u_i$. The control strategies are correct-by-construction, in the sense that the safety specification is guaranteed to be satisfied from any initial condition in the safe region.

Here, we employ the software tool \texttt{SCOTS} \cite{Rungger} to compositionally construct symbolic models and compute local safety controllers for subsystems $\Sigma_i$ with quantization parameters  $\eta^x_i = 0.1$ and $\eta^u_i = 0$, for each $i \in \N$.  
Computing symbolic models and synthesizing controllers for each subsystem took on average $0.006$s and $0.0004$s, respectively, on a PC with Intel Core i7 3.4 GHz CPU.
For each subnetwork, we show in Figure~\ref{fig:trajs} four sample state trajectories (upper plots of the sub-figures) and the corresponding input trajectories (lower plots of the sub-figures) of sample subsystems starting from random initial conditions.  
As can be seen in the figures, at each time step, the synthesized controllers react to the change in the density of the traffic in the corresponding cells by turning the traffic lights green/red. It can be observed that the density of the traffic using the synthesized controllers always remain in the desired safe regions.

\section{Conclusion}

In this paper, we proposed a methodology to compositionally
construct symbolic models for infinite networks.
To do this, we first introduced a notion of so-called alternating simulation functions that can be used to relate infinite networks.
A compositional approach was then proposed to construct symbolic models locally for concrete subsystems under incremental input-to-state stability property. 
By leveraging max-type small-gain type conditions, we provided an algorithm as a guideline for the design of local quantization parameters, such that the symbolic model of the infinite network can satisfy any given desired approximation accuracy.
A decentralized controller synthesis approach was presented to enforce safety properties on the overall infinite network. 
Finally, we applied our results on a road traffic network to verify the effectiveness of our compositionality results.

%
%

\section*{Acknowledgment}
The authors would like to thank Abdalla Swikir for his fruitful discussions.

\bibliographystyle{IEEEtran}      
\bibliography{mybibfile} 

\end{document}